\newtheorem{theorem}{Theorem} 
\def\BibTeX{{\rm B\kern-.05em{\sc i\kern-.025em b}\kern-.08em
    T\kern-.1667em\lower.7ex\hbox{E}\kern-.125emX}}
\begin{document}
%
\title{Joint AoI and Handover Optimization in Space-Air-Ground Integrated Network}
%
%
%

\author{Zifan Lang, Guixia Liu, 
        Geng Sun,~\IEEEmembership{Senior Member,~IEEE,}
        Jiahui Li, 
        Jiacheng Wang,\\ 
        Weijie Yuan,~\IEEEmembership{Senior Member,~IEEE,}
        Dusit Niyato,~\IEEEmembership{Fellow,~IEEE,}
        Dong In Kim,~\IEEEmembership{Life Fellow,~IEEE}
\thanks{Zifan Lang, Guixia Liu, and Jiahui Li are with the College of Computer Science and Technology, Jilin University, Changchun 130012, China (e-mails: langzf23@mails.jlu.edu.cn; liugx@jlu.edu.cn; lijiahui@jlu.edu.cn).}   
\thanks{Geng Sun is with the College of Computer Science and Technology, Jilin University, Changchun 130012, China, and with Key Laboratory of Symbolic Computation and Knowledge Engineering of Ministry of Education, Jilin University, Changchun 130012, China; he is also affiliated with the College of Computing and Data Science, Nanyang Technological University, Singapore 639798 (e-mail: sungeng@jlu.edu.cn). }
\thanks{Jiacheng Wang and Dusit Niyato are with the College of Computing and Data Science, Nanyang Technological University, Singapore (e-mails: jiacheng.wang@ntu.edu.sg; dniyato@ntu.edu.sg).}
\thanks{Weijie Yuan is with the Department of Electronic and Electrical Engineering, Southern University of Science and Technology, Shenzhen 518055, China (e-mail: yuanwj@sustech.edu.cn).}
\thanks{Dong In Kim is with the Department of Electrical and Computer Engineering, Sungkyunkwan University, Suwon 16419, South Korea (e-mail:
dongin@skku.edu).}
\thanks{\textit{(Corresponding authors: Guixia Liu and Geng Sun.)}}
}

\IEEEtitleabstractindextext{
    \begin{abstract}
    Despite the widespread deployment of terrestrial networks, providing reliable communication services to remote areas and maintaining connectivity during emergencies remains challenging. Low Earth orbit (LEO) satellite constellations offer promising solutions with their global coverage capabilities and reduced latency, yet struggle with intermittent coverage and limited communication windows due to orbital dynamics. This paper introduces an age of information (AoI)-aware space-air-ground integrated network (SAGIN) architecture that leverages a high-altitude platform (HAP) as intelligent relay between the LEO satellites and ground terminals. Our three-layer design employs hybrid free-space optical (FSO) links for high-capacity satellite-to-HAP communication and reliable radio frequency (RF) links for HAP-to-ground transmission, and thus addressing the temporal discontinuity in LEO satellite coverage while serving diverse user priorities. Specifically, we formulate a joint optimization problem to simultaneously minimize the AoI and satellite handover frequency through optimal transmit power distribution and satellite selection decisions. This highly dynamic, non-convex problem with time-coupled constraints presents significant computational challenges for traditional approaches. To address these difficulties, we propose a novel diffusion model (DM)-enhanced dueling double deep Q-network with \underline{a}ction decomposition and \underline{s}tate transformer encoder (DD3QN-AS) algorithm that incorporates transformer-based temporal feature extraction and employs a DM-based latent prompt generative module to refine state-action representations through conditional denoising. Simulation results highlight the superior performance of the proposed approach compared with policy-based methods and some other deep reinforcement learning (DRL) benchmarks. Moreover, performance analysis under various system settings verifies the robustness of the proposed approach. 
    \end{abstract}
    
    \begin{IEEEkeywords}
        space-air-ground integrated network, hybrid FSO/RF communications, age of information, deep reinforcement learning, generative diffusion models. 
    \end{IEEEkeywords}
}
\maketitle


\IEEEpeerreviewmaketitle

\section{Introduction}
\IEEEPARstart{A}{s} a promising candidate for network architecture capable of supporting access to massive Internet of Things (IoT) devices, satellite communication has garnered significant interest from both academia and industry owing to its wide communication coverage, long transmission distances, and flexible deployment capabilities~\cite{Wang2025}. In particular, low Earth orbit (LEO) satellite constellations offer reduced latency and improved link budgets compared to traditional geostationary satellites, making them increasingly attractive for time-sensitive applications and remote area coverage~\cite{Li2024}. These advantages position LEO satellite networks as critical infrastructure for global IoT connectivity, disaster relief, emergency communications, and various monitoring systems requiring widespread geographical coverage~\cite{fan2025satellite}. 

\par LEO satellites face inherent limitations in providing continuous service to specific geographical regions due to their orbital dynamics, resulting in intermittent coverage and limited communication windows with ground terminals~\cite{Hui2025}. In particular, the temporal discontinuity issue in LEO satellite coverage becomes particularly problematic when serving multiple ground users with diverse priority levels and data freshness requirements~\cite{Chen2024a}. Furthermore, the limited communication opportunities during satellite passes create resource contention among ground users. Such contention leads to inefficient spectrum utilization and increased transmission delays, thus increasing the age of information (AoI)\cite{Chen2025}. As such, the aforementioned challenges require an innovative architectural approach that can bridge coverage gaps while intelligently managing information dissemination to ground users based on their specific requirements and priorities. %

\par To this end, we consider to design a space-air-ground integrated network (SAGIN) architecture to enhance the information freshness~\cite{zhang2025aoi}. In this architecture, high-altitude platform (HAP) serves as a relay between the LEO satellites and ground users. Moreover, free-space optical (FSO) links enable high-capacity satellite-to-HAP communication, while robust radio-frequency (RF) links support reliable HAP-to-ground transmission~\cite{li2025llm}. As such, the quasi-stationary nature of HAP provides more predictable link budgets and simplified tracking requirements. 
By bridging space and ground layers, the hybrid SAGIN architecture enhances scalability for large-scale IoT deployments through efficient user access management and the adoption of AoI-aware transmission policies that prioritize timely data delivery. 

\par However, designing an effective and reliable HAP-assisted relay system involves several complexities. \textit{Firstly}, the dynamic topology created by LEO satellite movement necessitates adaptive resource allocation and link scheduling strategies to maintain continuous connectivity. \textit{Secondly}, the heterogeneous nature of FSO and RF links introduces complex interrelationships in resource allocation decisions. 
\textit{Finally}, the uncertainty and variability introduced by dynamic updates of satellite data demand robust optimization strategies that can adapt in real time. However, existing approaches in the literature, such as conventional convex optimization and static resource allocation schemes, struggle to simultaneously address the high dynamics, heterogeneity, and time-varying uncertainty inherent in this system, which makes them inadequate for our scenario. 

\par Accordingly, this paper considers an AoI-aware HAP-assisted relay system for SAGINs and proposes an enhanced deep reinforcement learning (DRL) algorithm to minimize the AoI and handover frequency of satellites. The major contributions of this work are outlined as follows: 
\begin{itemize}
    \item \textit{Dynamic Multi-tier SAGIN Architecture for AoI-aware Communication:} We design a three-layer HAP-assisted downlink architecture that leverages hybrid FSO/RF communication. The architecture enables relay-assisted AoI-aware packet transmission from LEO satellites to ground terminals and supports dynamic satellite selection. We tailer the system for environments requiring delay-sensitive services such as real-time monitoring and disaster response. 
    
    \item \textit{Dynamic Joint Optimization Problem with Complex Interdependencies:} We model the system to capture the complex interplay between the LEO satellites mobility and dynamic channel conditions across multi-layer networks. Specifically, we find significant interdependencies between satellite handover processes and resource allocation decisions at the HAP layer. To this end, we formulate a joint optimization problem that simultaneously balances the AoI and satellite handover frequency through strategic handover decisions and optimal transmit power distribution. Moreover, the orbital dynamics of LEO satellites, stochastic channel conditions, and mixed decision space lead to highly dynamic and non-convex characteristics with time-coupled constraints. Thus we require an enhanced solution approach with superior adaptability to temporal variations. 
    \item \textit{Diffusion Model (DM)-enhanced DRL-based Algorithm:} Traditional optimization approaches, such as conventional convex optimization and static resource allocation schemes, struggle with the dynamic and non-convex nature of the optimization problem. As such, we propose a DM-enhanced dueling double deep Q-network with \underline{a}ction decomposition and \underline{s}tate transformer encoder (DD3QN-AS) algorithm to solve the optimization problem. 
    Specifically, DD3QN-AS incorporates the state transformer encoder to achieve more expressive value estimation across temporal variations. Moreover, the algorithm employs the DM to refine latent state-action representations, thereby achieving fast and adaptive convergence and enhanced stability performance. 
    
    \item \textit{Extensive Performance Evaluation:} Simulation results demonstrate that the proposed DD3QN-AS algorithm achieves faster convergence compared to various baseline algorithms. Moreover, DD3QN-AS achieves the lowest AoI values while maintaining a reduced satellite handover frequency. In addition, we evaluate the impact of different HAP queue scheduling policies on the DD3QN-AS algorithm, and reveal that the latest deadline first policy achieves the best performance. To further verify the robustness, we conduct the performance analysis of the proposed DD3QN-AS under different denoising steps. 
\end{itemize}
\par The rest of this paper is arranged as follows. Section~\ref{sec:related work} reviews the related works. Section~\ref{sec:system model} presents the models and problem. Section~\ref{sec:DRL} proposes the proposed algorithm. Section~\ref{sec:simulation} shows the simulation results, and Section~\ref{sec:conclusion} concludes the paper.

%
\section{Related Work}
\label{sec:related work}

\par In this work, we aim to propose an HAP-assisted AoI-aware downlink system for SAGIN by optimizing the transmit power distribution and satellite handover decisions. In the following, we review several key related works to highlight the novelty of this research. 

\subsection{FSO-based Satellite Networks Architectures}

\par Due to its exceptional bandwidth capabilities and minimal spectrum regulation constraints, FSO communication has emerged as a promising technique for next-generation systems to address the inherent limitations of conventional RF technologies.
For instance, the authors in~\cite{Le2023} proposed a cross-layer framework for FSO burst transmissions in SAGIN. 
The authors in~\cite{Mao2024} introduced an intelligent hierarchical routing strategy for ultra-dense FSO-based LEO satellite networks to satisfy diverse quality of service requirements of terrestrial applications. 
As illustrated in~\cite{Dabiri2025}, the authors developed a modulating retro-reflector-based optical downlink system for resource-constrained CubeSats. 
In addition, the authors in~\cite{chen2025free} developed an FSO semantic communication system for satellite remote sensing image transmission. 
However, these FSO-based architectures face significant challenges from atmospheric turbulence, cloud coverage, and weather conditions, which can severely degrade link quality and reliability, particularly for satellite-to-ground communications. 

\par To overcome these technology limitations, hybrid FSO/RF transmission architectures have been extensively investigated to offer a more robust solution for practical deployment scenarios in space-air-ground integrated networks. Specifically, the authors in~\cite{Ma2022} investigated the secrecy outage performance of a hybrid FSO/RF cooperative SAGIN. 
The authors in~\cite{xu2024cooperative} proposed a cooperative FSO/RF SAGIN with an adaptive scheme that switches between FSO-only and combined modes based on link quality. 
Likewise, the authors in~\cite{mashiko2025combined} proposed a combined control method for optimizing HAPs placement and coverage area allocation in hybrid FSO/RF SAGINs. 
Moreover, the authors in~\cite{fan2025gato} addressed a global transmission optimization strategy for SAGIN-assisted IoT data collection to maximize data upload capacity in 6G networks. 
However, these studies mainly focus on establishing static communication links while neglecting various dynamic elements, particularly inter-satellite handover complexities. 

\subsection{Optimization Metrics in SAGINs}

\par Optimizing performance metrics while maintaining service continuity represents a primary challenge in SAGINs. 
For example, the authors in~\cite{Gao2024} minimized the time-averaged network cost in SAGIN through joint optimization of task hosting, computation offloading and resource allocation strategies. 
The authors in~\cite{tan2024outage} introduced an analytical framework to optimize unmanned aerial vehicles (UAV) positioning in SAGIN and improve transmission fairness. 
As illustrated in~\cite{zhao2025energy}, the authors optimized the total energy consumption in probabilistic semantic communication for SAGIN by optimizing the semantic compression ratios and computational overhead. 
Likewise, the authors in~\cite{zhang2025integrated} minimized weighted energy consumption in SAGIN by joint optimization of beamforming, task offloading ratios, and computation resource allocation. 
However, the abovementioned works focus on optimizing aggregated static link reliability metrics and thus neglect the temporal dynamics and freshness requirements of data transmission in highly dynamic SAGIN environments. 

\par Furthermore, the concept of AoI has gained significant attention as a performance metric for time-sensitive applications in SAGINs. For example, the authors in~\cite{Ke2025} evaluated information freshness in multi-hop satellite IoT systems by deriving closed-form AoI expressions for various automatic repeat request (ARQ) and hybrid ARQ schemes. 
Similarly, the authors in~\cite{Deng2024} minimized the AoI and energy cost in SAGIN by introducing a customizable update cost metric for polar-coded systems. 
Moreover, the the authors in~\cite{Wang2024} proposed an AoI-thresholding transmission scheduling mechanism to optimize the weighted sum of control cost and power consumption in SAGIN. In addition, the authors in~\cite{zhang2025aoi} presented a two-stage optimization framework for SAGIN-based IoT information collection that minimized AoI and optimized HAP and UAV coverage areas and heights. 

\par Despite the advancements, these research efforts largely overlook the impact of satellite handover mechanisms, which fundamentally alter the network topology, disrupt information freshness guarantees, and critically influence overall system reliability in operational LEO satellite constellations. 

\subsection{Optimization Methods in SAGINs}
\par Various schemes are used to handle the complex optimization problems in the SAGIN system. 
For instance, the authors in~\cite{Nguyen2024} developed an alternating optimization scheme for computation offloading in hybrid edge–cloud based SAGIN, while integrating successive convex approximation to resolve non-convexity. 
The authors in~\cite{Bao2024} optimized SAGIN delivery rates via a decentralized caching status-aware scheme. 
Likewise, the authors in~\cite{Sun2024} addressed post-disaster SAGIN optimization through a joint task offloading and resource allocation framework, which combines game theory, convex optimization, and evolutionary computation. 
Moreover, the authors in~\cite{Mao2025} employed extended Kalman filter-based user tracking and Bernstein-type inequality reformulation to optimize outage-constrained energy efficiency in SAGIN. 
However, these methods still exhibit limitations in dynamic adjustment capabilities and lack timely feedback mechanisms. 

\par To address these limitations, recent academic efforts have started embedding learning-based approaches for optimizing SAGIN implementations. Specifically, the authors in~\cite{Elmahallawy2024} proposed a communication-efficient federated learning system for LEO satellite constellations, and designed a communication topology with aggregation strategies to balance models across orbits and mitigate Doppler effects. 
Analogously, the authors in~\cite{Qin2024} developed a differentiated federated reinforcement learning approach for SAGIN traffic offloading. The authors in~\cite{Chen2024} optimized the quality of service and resource utilization in SAGIN by using self-attention-based traffic prediction and knowledge-distilled reinforcement learning. Moreover, the authors in~\cite{Sun2024a} proposed a two-stage decoupled optimization approach that addressed task scheduling in competitive SAGIN environments by first using an auction algorithm, followed by a distributed proximal policy optimization reinforcement learning algorithm. Nevertheless, these methods often suffer from slow convergence and instability when applied to high-dimensional state spaces characteristic of satellite networks with numerous users and complex orbital mechanics. 

\subsection{Summary}
\par Despite substantial progress made by existing research in SAGINs, key challenges remain in managing AoI in dynamic HAP-assisted environments. \textit{Firstly}, although hybrid FSO/RF architectures enhance spectral efficiency and link robustness, they primarily target static link establishment, and overlook challenges posed by frequent satellite handovers and HAP relay coordination in dynamic environments. 
\textit{Secondly}, a broad range of SAGIN optimization studies aggregate static link reliability metrics, while ignoring the time-coupled evolution of AoI, especially when frequent satellite handovers reshape the service topology.  \textit{Finally}, the advanced optimization and learning frameworks have begun addressing the scale and heterogeneity of SAGIN. However, existing DRL methods often struggle with high-dimensional temporal state representations, multi-objective tradeoffs, and training instability in rapidly varying and user-diverse environments. Accordingly, we aim to devise an online algorithm with enhanced adaptability to jointly optimize AoI and satellite handover decisions under dynamic SAGIN conditions. 

\section{System Model}\label{sec:system model}
\begin{table}[t]
\centering
\caption{Main notations}
\label{tab:notations}
\begin{tabular}{ll}
\toprule
\textbf{Symbol} & \textbf{Definition} \\ \midrule
$B_j$ & Bandwidth allocated to user $j$ \\
$\mathcal{H}$ & High-altitude platform (HAP) \\
$h_{FSO}$ &Channel gain between satellite and HAP \\
$h_j$ & Channel gain between HAP and user $j$ \\
$i$, $N_{S}$, $\mathcal{N}$ & The index, the number, and the set of LEO satellites \\
$j$, $N_{U}$, $\mathcal{U}$ & The index, the number, and the set of ground users \\ 
$l_t$ & Index of the satellite selected by the HAP at time slot $t$ \\
$L_q$ & Maximum length of HAP buffer queue \\
$N_t$ & Number of satellite handovers from time slot 0 up to $t$ \\
$P_S$ & Transmit power of the satellite \\
$P_j(t)$ & Transmit power allocated to user $j$ at time slot $t$ \\
$P_{HAP}$ & Maximum total transmit power of the HAP \\
$q_t$ & HAP queue at time slot $t$ \\
$R_{FSO}(t)$ & Achievable rate between satellite and HAP $j$ at time slot $t$\\
$R_j(t)$ & Achievable rate between HAP and user $j$ at time slot $t$\\
$t$, $T$, $\mathcal{T}$ & The index, the number, and the set of time slots \\
$\gamma_{HAP}$ & SNR of the satellite-to-HAP FSO link \\
$\gamma_j$ & SNR of the HAP-to-user RF link for user $j$ \\
$\Gamma_i[t]$ & Indicator of new packet generation at satellite $i$ \\
$\theta_i[t]$ & AoI of satellite $i$ at the satellite side at time $t$ \\
$\delta_i[t]$ & AoI of satellite $i$ at the HAP at time $t$ \\
$\Delta_{i,j}[t]$ & AoI of satellite $i$ at user $j$ at time $t$ \\
$\Lambda_{th}^S, \Lambda_{th}^H$ & SNR thresholds for FSO and RF links \\
\bottomrule
\end{tabular}
\end{table}
\par In this section, we first present the system overview. Then, we detail the considered models, including the satellite handover, communication, HAP queue and AoI models, to characterize the optimization objectives and decision variables.  The main notations are presented in Table~\ref{tab:notations}. 

\subsection{System Overview}
\par As shown in Fig.~\ref{fig:system model}, we consider an HAP-assisted AoI-aware downlink system for SAGIN. In this system, the LEO satellites denoted as $\mathcal{S}\triangleq \{i\mid 1,2,\dots,N_{S}\}$ are equipped with high-resolution remote sensing devices to capture real-time data of the Earth surface. Due to the high orbital velocity of LEO satellites, direct transmission of large-scale data to ground terminals is challenging, as the limited coverage time and rapid movement restrict stable communication links~\cite{Xiao2024}. In this case, a single HAP, denoted $\mathcal{H}$, is deployed as a relay between the LEO satellites and the ground users, where the satellites communicate with the HAP via FSO links. We consider these ground users denoted as $\mathcal{U}\triangleq \{j\mid 1,2,\dots,N_{U}\}$ are randomly distributed and assigned to the HAP coverage area. 

\par Without loss of generality, we consider a discrete-time system over the timeline $\mathcal{T} \triangleq \{t \mid 1, 2, \dots, T\}$. At each time slot, a subset of LEO satellites with sufficient spectrum resources and favorable angles can communicate with the HAP. Furthermore, there is no direct communication between the LEO satellites, while the HAP can access only one LEO satellite at a time, and the selected satellite is indexed by $l_t$. Data from the selected satellite are transmitted to the HAP through an FSO link and temporarily stored in $N_U$ virtual queues at the HAP. However, the HAP has a fixed buffer size, and once the buffer reaches capacity, it cannot receive additional data from the satellite. This overflow leads to data loss and increases the AoI at the satellite. In this case, a queue scheduling mechanism is employed to prioritize and efficiently transmit the buffered data to ground users. 

\begin{figure}[t]
    \centering
    \includegraphics[width=0.9\linewidth]{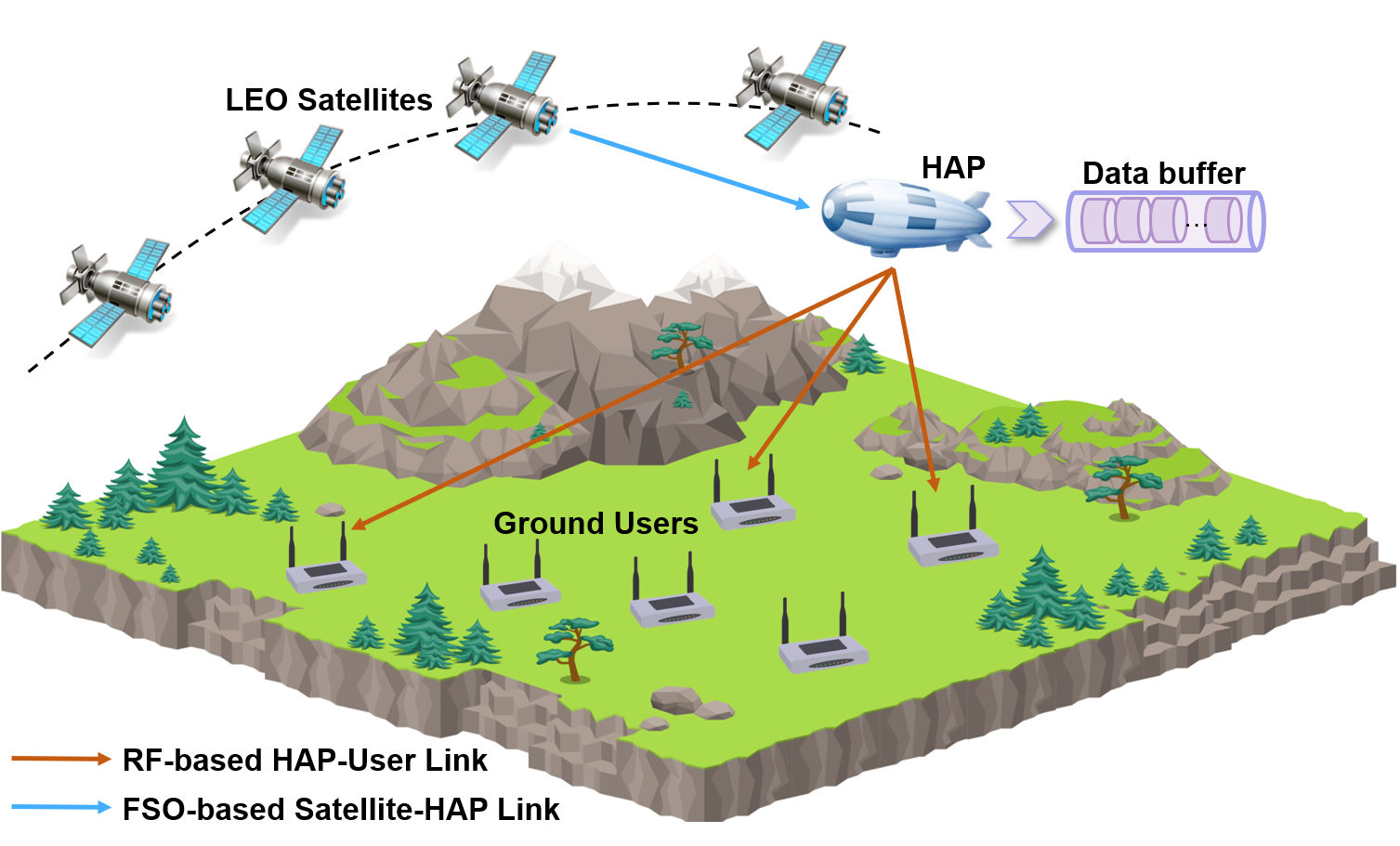}
    \caption{An HAP-assisted AoI-aware downlink system for SAGINs.}
    \label{fig:system model}
\end{figure}
\par We also consider a Cartesian coordinate system, where the locations of the HAP, $i$-th LEO satellite, and $j$-th user at time slot $t$ are represented as $\boldsymbol{c}^{HAP}=[x^{HAP},y^{HAP},z^{HAP}]$, $\boldsymbol{c}_{i,t}^{S}=[x^{S}_{i,t},y^{S}_{i,t}, z^{S}_{i,t}]$, $\boldsymbol{c}_{j,t}^{U}=[x^j_t,y^j_t, z^j_t]$, respectively.

\par As such, the limited spectral resources of satellites contribute to the uncertainty of availability. In the following, we introduce the satellite handover model and the satellite-HAP-user communication models, and then present the HAP queue model and AoI model. 

\subsection{Satellite Handover Model}
\par LEO satellites refer to satellites orbiting Earth at low altitudes, typically between 160 and 2000 kilometers. This rapid orbital motion results in a constantly changing satellite constellation, where each satellite is visible to the HAP for a limited time before moving out of range. Moreover, the orbital parameters of LEO satellites can be mathematically described by the tuple \(<\iota, \Omega, \omega, \epsilon, \rho, \nu>\)~\cite{Li2024}, \textit{i.e.}, 
\begin{itemize}
  \item \textit{Inclination Angle ($\iota$):} The angle between the orbital plane and the equator. 
  \item \textit{Right Ascension of Ascending Node ($\Omega$):} The angle between the vernal equinox and the point where the satellite crosses the equatorial plane.
  \item \textit{Argument of the Perigee ($\omega$):} The angle between the ascending node and the perigee, the point closest to Earth.
  \item \textit{Eccentricity ($\varepsilon$):} Describes the shape of the orbit. 
  \item \textit{Semi-Major Axis ($\varrho$):} Half of the longest diameter of the elliptical orbit.
  \item \textit{True Anomaly ($\nu$):} The angle between the perigee direction and the satellite's current position.
\end{itemize}

\par For simplicity, LEO satellites are typically modeled with circular orbits (\( \epsilon = 0 \)), where the orbital radius is given by \( H_i = h_i + R_e \), with \( h_i \) representing the satellite altitude and \( R_e \) the radius of Earth. The orbital period \( \tau_i \) is calculated as $\tau_i = 2\pi \sqrt{{H_i^3}/{GM_e}}$, 
where \( G \) is the gravitational constant, and \( M_e \) is the mass of Earth. Then, the position of the $i$-th satellite at time slot \( t \) in Cartesian coordinates is given by 
\begin{equation} \label{eq:orbit}
\begin{aligned}
&x^{S}_{i,t} = H_i \left(\cos (\omega_{i}^{t}+\nu_i) \cos \Omega_i-\sin (\omega_{i}^{t}+\nu_i) \cos \iota_i \sin \Omega_i \right), \\
&y^{S}_{i,t} = H_i \left(\cos (\omega_{i}^{t}+\nu_i) \sin \Omega_i+\sin (\omega_{i}^{t}+\nu_i) \cos \iota_i \cos \Omega_i \right), \\
&z^{S}_{i,t} = H_i \left( \sin (\omega_{i}^{t}+\nu_i) \sin \iota_i \right),
\end{aligned}
\end{equation}
where \( \omega_t = \omega_{\text{init}} + (t \cdot \pi_i \mod \tau_i) \), with \( \pi_i \) the angular velocity.

\par Given these parameters, the instantaneous position of the satellite can be expressed in a Cartesian coordinate system, and thus allowing the HAP to predict its trajectory over time. Moreover, the angular velocity and orbital period of the satellite can further aid in estimating the coverage duration and transition dynamics of the satellite.

\par At each time slot \( t \), the HAP evaluates whether to maintain its current satellite connection or switch to a new one. Let $ \mathbf{L} = \{l_t | t \in \mathcal{T}, l_t \in \mathcal{S}\} $ represent the sequence of selected satellites over the timeline $\mathcal{T}$. This decision sequence variable could determine the satellite handover frequency (\textit{i.e.}, the number of satellite handovers). Let $N_t$ be the number of satellite handovers at time slot $t$, and $N_t$ evolves as follows: 
\begin{align}
    N_{t+1}=\begin{cases} 
N_t, & \text{if }l_t = l_{t+1} \\
N_t + 1, & \text{otherwise}
\end{cases}. \label{frequency of satellite handover}
\end{align}

\subsection{Communication Model}
\par In this section, we provide a comprehensive description of the satellite-to-HAP and HAP-to-users transmission models.
\subsubsection{FSO-based Satellite-to-HAP Transmission Model}
In the designed SAGIN system, the transmission between the active LEO satellite and HAP is carried out through an FSO link, which benefits from a direct line-of-sight path and low interference under atmospheric conditions. As such, the FSO channel gain at time slot $t$ can be expressed as  
\begin{align}
    h_{FSO}(t)=h_lh_a(t),
\end{align}
where $h_l$ is the link loss component, which is given by 
\begin{align}
    h_l = \frac{1}{2} (G_T + G_R-A_{FS}-A_{ATM}-L_{loss}-M_S ),
\end{align}
with $G_T$, $G_R$, $A_{FS}$, $A_{ATM}$, $L_{loss}$, and $M_S$ representing the transmit antenna gain, receive antenna gain, free-space loss, atmospheric attenuation, lenses loss, and system margin, respectively~\cite{wu2024deep}. 
\par Moreover, the the fading parameter $h_a(t)$ is represented using the Gamma-Gamma probability distribution, which is a widely adopted statistical model specifically designed to capture the impacts of fading caused by turbulence in the atmosphere on optical wave propagation. Following this, the signal-to-noise ratio (SNR) at the HAP is given by 
\begin{align}
    \gamma_{HAP}(t)=\frac{P_S\eta_{OE}^2h_{EGC}^2(t)}{N_AN_q}\triangleq \overline{\gamma}_{HAP}h_{EGC}^2(t), 
\end{align}
where $P_S$ is the satellite transmit power, $\eta_{OE}$ represents the optical-to-electrical translation parameters, $N_A$ is the number of receiver apertures to collect optical signals, $h_{EGC}(t)=\sum\nolimits^{N_A}_{q=1}h_{FSO}(t)$ denotes the scalar channel fading of the receive aperture ensemble, $N_q$ is the noise power, and $\overline{\gamma}_{HAP}=\frac{P_S\eta_{OE}^2}{{N_AN_q}}$ denotes the average SNR. Thus, the transmission rate between the satellite and the HAP can be expressed as 
\begin{align}\label{eq:rate_FSO}
    R_{FSO}(t)=B_{FSO} \log_2(1+\gamma_{HAP}(t)),
\end{align}
where $B_{FSO}$ denotes the bandwidth of the FSO-based link. 
\par In this work, the HAP operates based on a selective decode-and-forward (DF) protocol~\cite{Bian2025}. In this case, if the SNR exceeds a predefined threshold $\Lambda_{th}^S$, the HAP can correctly decode the received signal and re-encode it to forward to the target user. Otherwise, the HAP cannot decode the received signal and remains silent. Therefore, $z_{HAP}$ is given by 
\begin{align}
    z_{HAP}[t] = \mathbf{1}\{\gamma_{HAP} \geq \Lambda_{th}^S\}.
\end{align} 

\subsubsection{RF-based HAP-to-Users Transmission Model}
\par In the second transmission phase, the HAP communicates with the ground users by using RF signals. Specifically, the HAP transmits data to multiple ground users simultaneously through orthogonal frequency division multiple (OFDM) technology. Moreover, we employ the Nakagami-$m$ fading to model the HAP-to-users channel. As such, the channel gain between the HAP and $j$-th user can be expressed as  
\begin{align}
    h_j(t) = C_j(t) g_j(t),
\end{align}
where $g_j(t)$ follows a Nakagami distribution, and $C_j(t)$ represents the large-scale channel component, which is given by 
\begin{align}
    C_j(t) = G_{H} &+ R_j^H \nonumber\\&+ \frac{1}{2}\big(20\lg \lambda_{RF} - 10\eta\lg d_j(t) - 20\lg (4\pi)\big),
\end{align}
where $G_{H}$ and $R_j^H$ are the transmit and receive antenna gains, respectively, $\eta$ is the path loss exponent, $\lambda_{RF}$ represents the RF carrier wavelength, and $d_j(t)$ denotes the transmission distance from the HAP to $j$-th user at time slot $t$. 
\par Following this, the SNR for the $j$-th user is given by 
\begin{align}
    \gamma_j(t) = \frac{P_j\left| h_j(t) \right|^2}{\sigma_j^2},
\end{align}
where $P_j$ is the transmit power from the HAP to $j$-th user and $\sigma_j^2$ is the noise power spectral density. Thus, the achievable rate from the HAP to $j$-th user is given by  
\begin{align}\label{eq:rate_j}
    R_j(t)=B_j \log_2\left(1+\gamma_j(t)\right),
\end{align}
where $B_j$ represents the carrier bandwidth. To fully capture the data flow and timing in the system, the next subsection will detail the queueing and scheduling mechanism of HAP for packet buffering and transmission.

\subsection{HAP Queue Model}
\par We describe the queuing and scheduling mechanisms at the HAP, where received packets are temporarily buffered before HAP-to-user transmission. The HAP maintains a finite-capacity and ordered buffer queue \(q_t\) of length \(L_q\) at time slot \(t\), which stores packets received from satellites. At each time slot, the HAP selects one satellite from the available set and receives a batch of packets denoted by \(\mathcal{P}_i^{H}[t]\), where \(\mathcal{P}_i^{H}[t]\subseteq q_t\). Then, the newly received packets are appended to the tail of \(q_t\). If \(|q_t|+|\mathcal{P}_i^{H}[t]|>L_q\), the oldest packets at the head of \(q_t\) are dropped to satisfy the capacity constraint. Moreover, the HAP schedules packets  $\mathcal{P}_j^{U}[t]$ from the queue $q_t$ to $j$-th user at time slot $t$ according to the adopted policy. 
\par Since the HAP operates in a quasi-stationary mode at stratospheric altitudes with solar-powered energy supply, its hovering energy consumption is minimal~\cite{Abderrahim2024}. Therefore, we do not consider the energy consumption of HAP in our system.

\subsection{AoI Model}
\par In this work, we adopt AoI as a key performance metric to quantify the freshness of status information packet. We denote $\theta_i[t]$, $\delta_i[t]$ and $\Delta_{i,j}[t]$ as the AoIs of the $i$-th satellite at the LEO satellite, HAP and ground users at time slot $t$, respectively. Note that, although sensing data at LEO satellites may experience minor buffering or scheduling delays before transmission, such delays are typically homogeneous across satellites. As such, we model packet generation as a stochastic process at the satellite side. Following this, the satellite-side AoI is given by 
\begin{align}
    \theta_i[t] &=
    \begin{cases} 
    0, & \text{if } \Gamma_i[t] = 1 \\
    \theta_i[t-1] + 1, & \text{otherwise}
    \end{cases} ,
\end{align}
where $\Gamma_{i}[t] = 1$ represents that a data packet is generated at $i$-th satellite at time slot $t$. Moreover, the delay from the $i$-th satellite to the HAP is given by 
\begin{align}
    T_i = \frac{\sum_{p \in \mathcal{P}_i^{H}[t]} D_p}{{R_{FSO}}},
\end{align} 
where $D_p$ represents the data size of packet $p$, and $R_{FSO}$ is the satellite-to-HAP transmission rate defined in Eq.~\eqref{eq:rate_FSO}. Thus, the AoI at the HAP side is given by 
\begin{align}
    \delta_i[t] &=
\begin{cases} 
\theta_i[t - T_i] + T_i, & \text{if } l[t]=i, z_{\text{HAP}}[t] = 1 \\
\delta_i[t - 1] + 1, & \text{otherwise}
\end{cases} .
\end{align}
\par Similarly, we denotes $T_j = \sum \nolimits_{p \in \mathcal{P}_j^{U}[t]} D_p / R_{j}$ as the delay from the HAP to the $j$-th ground user, where $R_{j}$ is given in Eq.~\eqref{eq:rate_j}. As such, the user-side AoI can be expressed as 
\begin{align}
    \Delta_{i,j}[t] &=
\begin{cases} 
\delta_i[t - T_j] + T_j, & \text{if } \gamma_j \geq \Lambda_{th}^H\\
\Delta_{i,j}[t - 1] + 1, & \text{otherwise}
\end{cases}. \label{eq:aoi}
\end{align}

\section{Problem Formulation and Analyses}\label{sec:formulation}
\par In this section, we formulate an optimization problem to improve the downlink transmission in SAGIN. We first present the decision variables and optimization objectives, then formulate an optimization problem, and finally give the corresponding analysis. 

\subsection{Problem Formulation}
\par In this work, we utilize the HAP as a relay to mitigate delays caused by direct satellite-to-ground communication challenges and minimize the satellite handover frequency to mitigate ping-pong handover issues. As such, the considered system involves two goals, \textit{i.e.}, reducing the time average AoI of satellites, and reducing the satellite handover frequency. 

\par At any time slot $t \in \mathcal{T}$ , the AoI is influenced by the satellite handover frequency. As such, the AoI and satellite handover frequency are tightly coupled. Thus, these optimization objectives have conflicting correlations. Accordingly, the coupling of variables and mutual influence of objectives require a multi-objective optimization formulation. The following decision variables need to be jointly determined: \textit{(i)} $\mathbf{P} = \left\{ P_j(t) \mid j \in \mathcal{U}, \, t \in \mathcal{T} \right\}$, a vector consisting of continuous variables denotes the transmit powers of ground users over time slots for HAP-to-users transmission. 
\textit{(ii)} $\mathbf{L} = \{l_t | t \in \mathcal{T}, l_t \in \mathcal{S}\}$, a vector consisting of discrete variables represents the selected satellite over time slots. 
\subsubsection{Optimization Objective 1}
\par The primary objective is to improve the time average AoI of LEO satellites. As such, the first optimization objective is given by
\begin{align}
    f_1(\mathbf{P},\mathbf{L})=\frac{1}{T}\sum_{t\in \mathcal{T}}\sum_{i=1}^{N_S} \Delta_{i}[t].
\end{align}

\subsubsection{Optimization Objective 2}
\par To minimize the AoI, the HAP needs to select an appropriate satellite as the receiver. However, frequent satellite handover will lead to ping-pong handover issues and incur additional link costs. Therefore, the second optimization objective can be expressed as 
\begin{align}
    f_2(\mathbf{L})=N_{\mathcal{T}}.
\end{align}

\par Following this, the corresponding joint optimization problem is formulated as follows: 
\begin{subequations}\label{opti}
\begin{align}
(P1) : \mathop{\min}\limits_{\mathbf{P}, \mathbf{L}} 
 & \ F=(f_1,f_2), \\
\text{ s.t. }
    &l_t\in \mathcal{S}, \quad \forall t\in \mathcal{T},\label{opti:sub1}\\
    &P_{\min}\leq P_j(t) \leq P_{\max}, \quad \forall j\in \mathcal{U}, \, \forall t \in \mathcal{T}, \label{opti:sub3}\\
    &\sum_{\forall j\in \mathcal{U}}^{}P_j(t) \leq P_{HAP}\quad \forall j\in \mathcal{U}, \, \forall t \in \mathcal{T}, \label{opti:sub4}
\end{align}
\end{subequations}
where Eq.~\eqref{opti:sub1} ensures that the selected satellite is from the set of visible satellites at each time slot. Moreover, Eqs.~\eqref{opti:sub3} and \eqref{opti:sub4} define the HAP transmit power constraints. 

\subsection{Problem Analyses}
From the above formulation, we can observe that the problem has the following characteristics: 
\subsubsection{Non-convexity}
\par According to the problem $(P1)$ presented in Eq.~\eqref{opti}, the continuous transmit power variables $\mathbf{P}$ and the discrete satellite connectivity decisions $\mathbf{L}$ are coupled. As a result, the feasible region of $(P1)$ is inherently nonconvex, making the problem challenging to solve using conventional optimization methods. 
\subsubsection{NP-hard}
\par When fixing the discrete variable $\mathbf{L}$, $(P1)$ reduces to a continuous nonlinear programming problem. Moreover, fixing the continuous variable $\mathbf{P}$ results in a combinatorial optimization over discrete variables. As such, the resulting optimization can be reduced to a mixed-integer nonlinear programming (MINLP) problem, which has been proven to be an NP-hard problem~\cite{Belotti2013}. Therefore, solving $(P1)$ optimally is computationally intractable. 
\subsubsection{Trade-off}
\par It can be observed that conflicts exist between the two optimization objectives. For instance, minimizing the AoI requires frequent satellite handovers to maintain freshness of information. In contrast, reducing handover frequency to save handover costs can lead to stale information and higher AoI. Therefore, a trade-off arises between maintaining low AoI and limiting handover frequency, which requires a balanced optimization strategy.

\section{The Proposed DD3QN-AS}\label{sec:DRL}
\par This section presents a DM-enhanced DRL-based method to address the formulated optimization problem. In pursuit of this objective, we first transform our problem into a Markov decision process (MDP). Then, we detail the proposed DD3QN-AS algorithm with several enhancements.

\subsection{MDP Formulation}
\par Following the aforementioned analysis, the formulated problem is inherently dynamic and stochastic, involving continuously changing system parameters such as the positions of LEO satellites, and user channel conditions. This results in a highly uncertain environment where decisions regarding satellite handover and power allocation must be made sequentially in real-time. Traditional optimization methods, such as convex optimization and evolutionary algorithms, face difficulties in such rapidly changing environments due to their reliance on accurate system models~\cite{Li2024}. In such cases, DRL learns adaptive policies directly from environment interactions, which efficiently handles high-dimensional states, nonlinear dynamics, and uncertain transitions via deep neural function approximation. Therefore, based on DRL frameworks, we can develop a robust algorithm that dynamically adjusts decision variables to maximize overall system performance. 
\par To solve this problem via DRL-based frameworks, we first reformulate the optimization problem shown in Eq.~\eqref{opti} as an MDP. Mathematically, an MDP is a tuple \( (S, A, P, R, \gamma) \), which are the state space, action space, state transition probability, reward function, and discount factor, respectively. Among them, the state, action, and reward are the most important components, which are detailed as follows: 
\begin{itemize}
    \item \textit{State Space:} The state space is designed to capture essential spatial and operational dynamics that influence system performance. Specifically, the positions of both satellites and ground users, along with the satellite-side AoI, HAP-side AoI, and user-side AoI, are incorporated to reflect spatial dynamics. Thus, the state at time slot $t$ is formally expressed as 
    \begin{align}
        s_t=\{ \boldsymbol{c}_t^S, \boldsymbol{c}_t^U, \theta[t], \delta[t],\Delta[t]\}.
    \end{align}
    \item \textit{Action Space:} In our system, the HAP is required to select one satellite to connect with at any given time slot. In addition, the HAP allocates its available transmission power among ground users within that time slot. 
    Therefore, the possible action at time slot $t$ for the HAP is given by 
    \begin{align}
        a_t = \{l_t, \mathbf{P}\}.
    \end{align}
    \item \textit{Reward Function:} An appropriate reward function is critical for effective problem-solving in DRL frameworks. As such, the reward function involves our optimization objective and the corresponding constraints shown in Eq. \eqref{opti}. Specifically, our reward is defined as 
    \begin{align}
        r_t= -\rho_1 \sum_{i\in \mathcal{S}}\Delta_i[t] - \rho_2 N_t + \rho_3 \sum_{j\in \mathcal{U}}R_j,
    \end{align}
    where $\rho_1$, $\rho_2$, $\rho_3$ are three normalization parameters to adjust to bring these terms to the same order of magnitude. 
\end{itemize}

\subsection{Standard D3QN Algorithm}
\par In this paper, we adopt the dueling double deep Q-network (D3QN) algorithm as our optimization framework~\cite{Wang2016}. Specifically, D3QN is a model-free and off-policy reinforcement learning approach that efficiently mitigates value overestimation bias and improves the approximation of the value function, thereby leading to more stable and effective learning in complex environments~\cite{Wang2016}. D3QN integrates two important extensions of the original deep Q-Network (DQN), which are the double Q-learning and dueling network architecture~\cite{Cai2023}.

\par Similar to the standard DQN, D3QN aims to learn an optimal action value function \( Q(s, a) \), which represents the expected cumulative reward when taking action \( a \) in state \( s \) and following the optimal policy thereafter. In standard Q-learning, \( Q(s, a) \) is updated iteratively by using the Bellman equation, which can be expressed as 
\begin{align}
    Q(s, a) \leftarrow Q(s, a) + \alpha \left[ r + \gamma \max_{a'} Q(s', a') - Q(s, a) \right],
\end{align}
where \( \alpha \) is the learning rate, \( \gamma \) is the discount factor, \( r \) is the immediate reward, and \( s' \) is the next state. However, the \( \max \) operator is prone to overestimating action values. To mitigate this, double Q-learning decouples the selection and evaluation of the action by using two separate networks, which are \( Q_{\text{online}} \) and \( Q_{\text{target}} \). As such, the target value is given by 
\begin{align}\label{eq:target_value}
    y = r + \gamma Q_{\text{target}}(s', \arg\max_{a'} Q_{\text{online}}(s', a')).
\end{align}
\par Furthermore, D3QN incorporates a dueling architecture that separates the estimation of the state-value function \( V(s) \) from the advantage function \( A(s, a) \). Then, the Q-value is given by 
\begin{align}\label{eq:Q-value}
    Q(s, a) = V(s) + \left( A(s, a) - \frac{1}{|\mathcal{A}|} \sum\nolimits_{a'} A(s, a') \right),
\end{align}
where \( \mathcal{A} \) denotes the action space. This decomposition provides more accurate $V(s)$ estimation under weak action influence. Moreover, the network parameters are optimized by minimizing the temporal-difference (TD) loss between predicted Q-values and bootstrapped targets, where the TD loss can be expressed as 
\begin{align}
    \mathcal{L}(\theta) = \mathbb{E}_{(s, a, r, s') \sim \mathcal{D}} \left[ \left( y - Q_{\text{online}}(s, a; \theta) \right)^2 \right],
\end{align}
where \( \theta \) denotes the parameters of the online network, and \( \mathcal{D} \) is the experience replay buffer. 
\par As such, D3QN enhances both the stability and performance of the learning process. Specifically, it mitigates overestimation bias through double Q-learning and improves learning efficiency by decoupling state value and action advantage via the dueling architecture. These enhancements lead to more accurate value estimation, faster convergence, and improved generalization, particularly in environments where the influence of individual action is non-uniform. 

\begin{figure*}[t]
    \centering
    \includegraphics[width=0.9\linewidth]{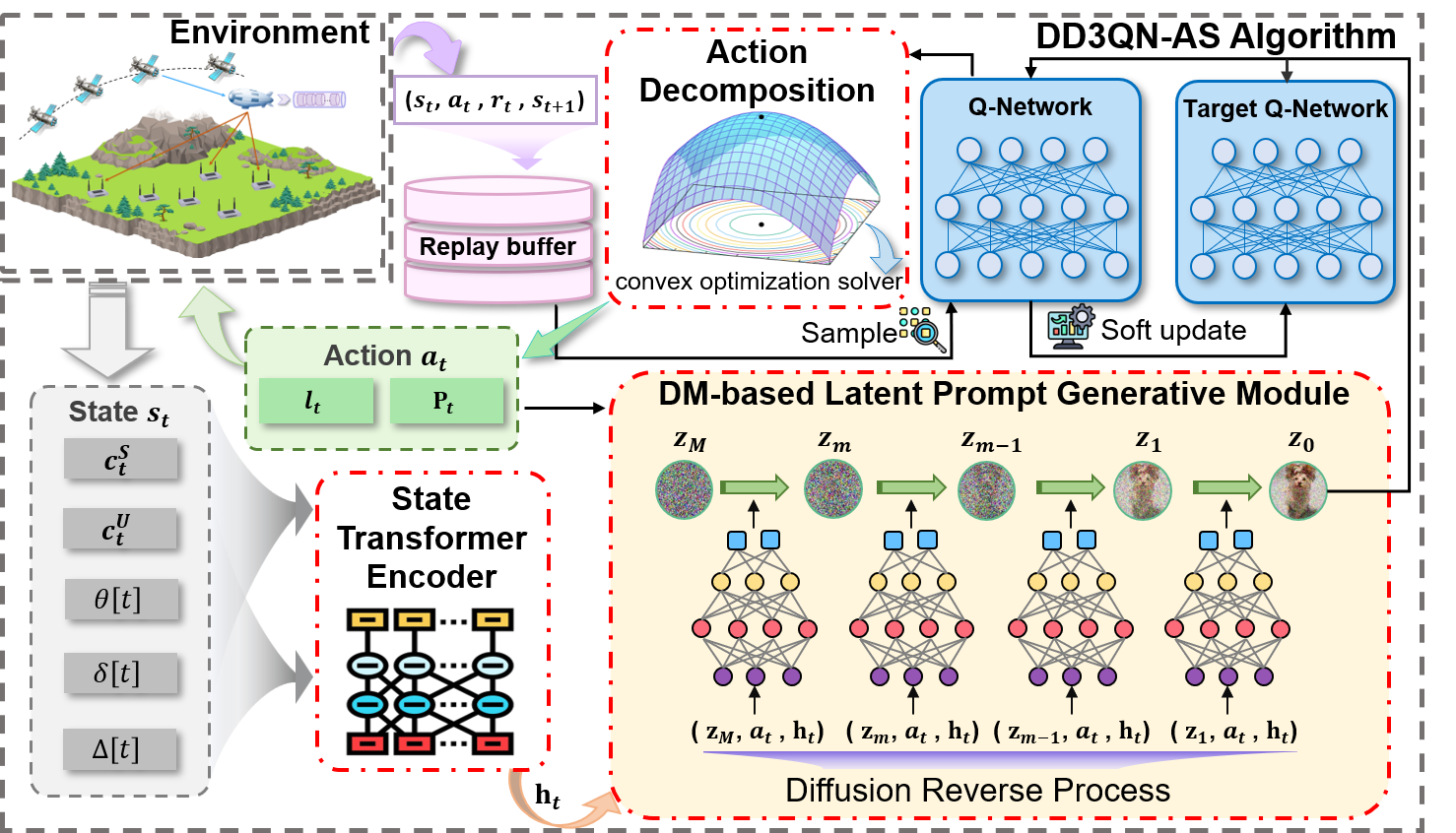}
    \caption{Framework of the proposed DD3QN-AS algorithm for the considered AoI-aware SAGIN, where the action decomposition module separates discrete and continuous actions, the STE captures temporal state features, and the DLPG generates robust action-state prompts to improve decision reliability. }
    \label{fig:algorithm}
\end{figure*}

\par However, in our MDP, the state space exhibits high variability and the reward signals are often sparse or delayed, which poses significant challenges for the standard D3QN algorithm. Specifically, the TD learning mechanism in D3QN struggles to accurately propagate value over extended time horizons, thereby leading to suboptimal credit assignment~\cite{ArjonaMedina2019}. Moreover, due to the limited expressiveness of the state-action encoding, the Q-value estimation tends to be unstable and sensitive to network initialization and exploration noise~\cite{Chi2024}. These issues ultimately hinder the convergence and reliability of the learned policy. Therefore, we propose several enhancements to D3QN adapted to the formulated optimization problem.

\subsection{DD3QN-AS Algorithm}
\par To address the limitations of standard D3QN for the formulated problem, we propose DD3QN-AS, a novel algorithm that incorporates three key enhancements. Firstly, we incorporate an action decomposition mechanism that addresses the hybrid discrete-continuous action space. 
Secondly, we introduce a state transformer encoder (STE) mechanism that tokenizes the raw state into semantic representations, thus enabling more expressive and generalizable value estimation across temporal variations. Thirdly, we design a DM-based latent prompt generative (DLPG) module that leverages conditional denoising to refine latent state-action representations and to stabilize training under stochastic environment. 
\par Fig.~\ref{fig:algorithm} illustrates the architecture of the proposed DD3QN-AS algorithm, which highlights the interaction between the three enhancements and core D3QN learning components. The details of the three enhancements are as follows. 
\subsubsection*{1) Action Decomposition}
\par To address the hybrid action space problem that combines continuous power allocation variables and discrete satellite decisions, we propose an action decomposition approach that separates these decision variables while maintaining solution optimality~\cite{Fan2019}. Specifically, as shown in Eq. \eqref{eq:aoi}, the transmission delay from HAP to ground users directly impacts AoI evolution. The delay is inversely related to the achievable data rate, which depends on the transmit power allocation decision variables as shown in Eq.~\eqref{eq:rate_j}. Therefore, instead of treating power allocation as part of the DRL action space, we can formulate it as a separate convex optimization problem as follows: 
\begin{subequations}\label{opti2}
\begin{align}
(P2) : \mathop{\max}\limits_{\mathbf{P}} \mathop{\min}\limits_{j\in \mathcal{U}} 
 & \{\ R_j=B_j\log_2(1+\frac{P_j\left| h_j \right|^2}{\sigma_j^2}) \}, \\
\text{ s.t. }
    &P_{\min}\leq P_j \leq P_{\max}, \label{eq:P2_1}\\
    &\sum\nolimits_{\forall j\in \mathcal{U}}^{}P_j \leq P_{HAP}, \label{eq:P2_2}
\end{align}
\end{subequations}

\begin{theorem}
\text{In the considered scenarios and feasible set of } $P_j$ (\(j \in \mathcal{U}\)), \text{ the problem $(P2)$ is convex.}
\end{theorem}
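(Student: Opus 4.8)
The plan is to recognize $(P2)$ as a standard max–min resource allocation problem and to verify the two ingredients that make it a convex program: a convex feasible set together with a concave objective (since we maximize). First I would observe that the constraints in Eqs.~\eqref{eq:P2_1} and \eqref{eq:P2_2} carve out the intersection of the box $\prod_{j\in\mathcal{U}}[P_{\min},P_{\max}]$ with the half-space $\{\mathbf{P}:\sum_{j\in\mathcal{U}}P_j\le P_{HAP}\}$. Both sets are convex polyhedra, so their intersection is a convex (indeed polyhedral) feasible region.

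Next I would show that each per-user rate $R_j(P_j)=B_j\log_2\!\big(1+P_j|h_j|^2/\sigma_j^2\big)$ is concave in $P_j$ over the feasible range. Within a time slot the quantities $B_j>0$, $|h_j|^2>0$, and $\sigma_j^2>0$ are fixed positive constants, so the argument $1+P_j|h_j|^2/\sigma_j^2$ is an affine, strictly positive function of $P_j$; composing the concave nondecreasing map $x\mapsto B_j\log_2 x$ with this affine map preserves concavity. Equivalently, writing $c_j=|h_j|^2/\sigma_j^2>0$, a direct computation gives $R_j''(P_j)=-\tfrac{B_j}{\ln 2}\,c_j^2\big(1+c_jP_j\big)^{-2}<0$, which confirms strict concavity.

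Then I would invoke the fact that the pointwise minimum of a finite family of concave functions is concave, so the objective $g(\mathbf{P})=\min_{j\in\mathcal{U}}R_j(P_j)$ is concave on the feasible set; maximizing a concave function over a convex set is by definition a convex optimization problem, which establishes the claim. If an explicit form is preferred, I would introduce an epigraph variable $\tau$ and recast $(P2)$ as $\max_{\mathbf{P},\tau}\tau$ subject to $R_j(P_j)\ge\tau$ for all $j\in\mathcal{U}$ together with Eqs.~\eqref{eq:P2_1}--\eqref{eq:P2_2}; each constraint $R_j(P_j)\ge\tau$ is a superlevel set of a concave function and hence convex, so the reformulation is manifestly a convex program with a linear objective over a convex set.

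The argument is essentially mechanical, and I do not anticipate a genuine obstacle. The only point requiring care is to state explicitly that the channel realizations $h_j$ and the system parameters $B_j,\sigma_j^2$ are treated as constants while optimizing over $\mathbf{P}$ within a slot, so that the log-of-affine composition rule applies; this is precisely where the qualifier ``in the considered scenarios'' in the statement is used.
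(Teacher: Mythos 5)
Your proposal is correct and follows essentially the same route as the paper's proof: the same second-derivative computation showing each $R_j$ is strictly concave in $P_j$, and the same observation that the feasible set is the intersection of a box with a half-space. You are in fact slightly more complete, since you explicitly invoke that the pointwise minimum of concave functions is concave (a step the paper states as required but does not spell out) and you add an optional epigraph reformulation; neither changes the substance of the argument.
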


\begin{proof}
To prove that $(P2)$ is a convex optimization problem, it suffices to establish the following: 
\begin{enumerate}
    \item The objective function $\min_{j\in \mathcal{U}} R_j$ is concave with respect to $\mathbf{P}$;
    \item The constraints in Eqs. \eqref{eq:P2_1} and \eqref{eq:P2_2} are convex.
\end{enumerate}
\par Firstly, the concavity of the objective function is established as follows. Define $a_j=|h_j|^2/\sigma_j^2$. To prove that $R_j(P_j)$ is concave, the second-order derivative is given by 
\begin{align}
    \frac{\partial^2 R_j}{\partial P_j^2}=-\frac{B_j}{\ln 2}\cdot \frac{a_j^2}{(1+a_jP_j)^2} <0.
\end{align}
Since $B_j > 0$, $a_j > 0$, and $(1+a_jP_j) > 0$ for all feasible $P_j$, the second derivative is strictly negative, which confirms that each $R_j(P_j)$ is strictly concave in $P_j$. 

\par Secondly, the convexity of the constraint set is established through the following observations. 
The constraints in Eq.~\eqref{eq:P2_1} define box constraints that form a hyperrectangle, which are convex. The constraint in Eq.~\eqref{eq:P2_2} is a linear inequality that defines a half-space, which is also convex. Since the intersection of convex sets is convex, the overall constraint set is convex. 
\par Therefore, problem $(P2)$ is a convex optimization problem.
\end{proof}

\subsubsection*{2) State Transformer Encoder}
\par To extract high-level spatiotemporal features from the environment state, we design an STE to replace the original MLP-based encoder in D3QN. Given a raw state vector $\mathbf{s}_t \in \mathbb{R}^d$, it is first segmented into a sequence of $n$ state tokens, which can be expressed as $\mathbf{X}_t = \left[
\mathbf{x}_t^{(1)}, \mathbf{x}_t^{(2)}, \dots, \mathbf{x}_t^{(n)}
\right], \quad \mathbf{x}_t^{(i)} \in \mathbb{R}^{d'}$. 

\par Moreover, these tokens are embedded and processed through a transformer encoder consisting of $L$ layers~\cite{Chen2021}. Each layer comprises two sub-layers, which are multi-head self-attention (MSA) mechanism and position-wise feed-forward network (FFN). The computation in the $l$-th layer is given by 

\begin{align}
    \tilde{\mathbf{H}}_t^{(l)} 
&= \mathbf{H}_t^{(l-1)} + \text{MSA}\big(\mathbf{H}_t^{(l-1)}\big) \notag \\
&= \mathbf{H}_t^{(l-1)} + 
\underbrace{
\text{softmax}\left( \mathbf{Q}^{(l)} \mathbf{K}^{(l) \top} /{\sqrt{d_k}} \right)
\mathbf{V}^{(l)},
}_{\text{Multi-head self-attention}}
\\[6pt]
\mathbf{H}_t^{(l)} 
&= \tilde{\mathbf{H}}_t^{(l)} + \text{FFN}\big(\tilde{\mathbf{H}}_t^{(l)}\big) \notag \\
&= \tilde{\mathbf{H}}_t^{(l)} + 
\underbrace{
\text{GELU}\big(\tilde{\mathbf{H}}_t^{(l)} \mathbf{W}_1^{(l)} + \mathbf{b}_1^{(l)}\big) \mathbf{W}_2^{(l)} + \mathbf{b}_2^{(l)},
}_{\text{Feed-forward network}}
\end{align}
where $d_k$ is the dimension of the key vectors, and $\mathbf{Q}^{(l)} = \mathbf{H}_t^{(l-1)} \mathbf{W}_Q^{(l)}$, $\mathbf{K}^{(l)} = \mathbf{H}_t^{(l-1)} \mathbf{W}_K^{(l)}$, and $\mathbf{V}^{(l)} = \mathbf{H}_t^{(l-1)} \mathbf{W}_V^{(l)}$ are the projections of the query, key, and value for the $l$-th layer, respectively. 
After the final layer, a pooling operation is applied to obtain the compact state representation, which can be expressed as 
\begin{align}
    \mathbf{h}_t = \text{Pool}(\mathbf{H}_t^{(L)}) = \frac{1}{n}\sum_{i=1}^n \mathbf{H}_t^{(L, i)}.
\end{align} 

\subsubsection*{3) DM-based Latent Prompt Generative Module}
\par To improve policy robustness in the presence of non-stationarity, we introduce a DLPG module to stochastically augment the encoded state $\mathbf{h}_t$ extracted by STE. Specifically, DLPG adopts a conditional generative modeling strategy based on diffusion processes~\cite{Rombach2022}\cite{Shribak2024}. Let $\mathbf{z}_0$ be an unobserved latent prompt drawn from a standard Gaussian prior. Then, the conditional reverse process can be expressed as 
\begin{align}
    p_\theta(\mathbf{z}_{0:M} \mid \mathbf{h}_t, \mathbf{e}_a) = p(\mathbf{z}_M) \prod_{m=1}^{M} p_\theta(\mathbf{z}_{m-1} \mid \mathbf{z}_m, \mathbf{h}_t, \mathbf{e}_a),
\end{align}
where $\mathbf{e}_a$ is the embedding of action $a_t$, $m$ denotes each reverse step, and $M$ is the total number of denoising steps. 
\par To train the denoising model, the objective is defined as minimizing the noise reconstruction error, which is given by 
\begin{align}\label{eq:loss_dlpg}
    \mathcal{L}_{\text{DLPG}} = \mathbb{E}_{m, \mathbf{z}_0, \epsilon} \left[ \left\| \epsilon - \epsilon_\theta(\mathbf{z}_m, m \mid \mathbf{h}_t, \mathbf{e}_a) \right\|^2 \right].
\end{align}
During inference, $\hat{\mathbf{z}}_0$ is obtained by iterative sampling from the learned denoising process. Moreover, the enhanced latent representation is represented as $\tilde{\mathbf{h}}_t = \text{Concat}(\mathbf{h}_t, \hat{\mathbf{z}}_0)$, which serves as input to the dueling Q-network $Q(s_t, a_t) = f_Q(\tilde{\mathbf{h}}_t, a_t)$. In addition, the detailed generation procedure is summarized in Algorithm~\ref{alg:DLPG}. 
\begin{algorithm}[t]
\label{alg:DLPG}
\caption{DLPG Module}
\KwIn{Encoded state $\mathbf{h} \in \mathbb{R}^{d_h}$, action $a$, diffusion steps $M$, noise schedule $\{\beta_m\}_{m=1}^M$}
\KwOut{Enhanced latent prompt $\hat{\mathbf{z}}_0$}
\tcc{Training phase } 
Sample noise $\epsilon \sim \mathcal{N}(0, \mathbf{I})$ \;
Sample diffusion timestep $m\in\{1, \ldots, M\}$ \;
Sample latent $\mathbf{z}_0 \sim \mathcal{N}(0, \mathbf{I})$ \;
Compute noisy latent: $\mathbf{z}_m = \sqrt{\alpha_m} \mathbf{z}_0 + \sqrt{1-\alpha_m} \epsilon,\quad \alpha_m = \prod_{i=1}^m (1 - \beta_i)$ \;
Embed action $a_t$ to vector $\mathbf{e}_a$ \;
Condition on both $\mathbf{h}$ and $\mathbf{e}_a$: $\hat{\epsilon}_\theta = \epsilon_\theta(\mathbf{z}_m, m \mid \mathbf{h}, \mathbf{e}_a)$ \;
Compute loss using Eq.~\eqref{eq:loss_dlpg}\; 
\tcc{Inference phase}
Initialize $\mathbf{z}_M \sim \mathcal{N}(0, \mathbf{I})$ \;
\For{$m = M$ \KwTo $1$}{
    Predict noise: $\hat{\epsilon}_\theta = \epsilon_\theta(\mathbf{z}_m, m \mid \mathbf{h}, \mathbf{e}_a)$ \;
    Compute posterior mean: $\mathbf{z}_{m-1} = \frac{1}{\sqrt{1-\beta_m}}\left(\mathbf{z}_m - \frac{\beta_m}{\sqrt{1-\alpha_m}} \hat{\epsilon}_\theta\right) + \sigma_m \mathbf{w}, \mathbf{w} \sim \mathcal{N}(0, \mathbf{I})$ \;
}
\Return final enhanced latent prompt $\hat{\mathbf{z}}_0 = \mathbf{z}_0$
\end{algorithm}

\begin{algorithm}[t]
\label{alg:whole-alg}
\caption{DD3QN-AS Algorithm}
\KwIn{Environment $Env$, total episodes $N_{\text{eps}}$, batch size $B$, soft update rate $\tau$, diffusion steps $M$}
\KwOut{Q-network parameters $\theta$}

Initialize transformer encoder $\mathcal{F}_{\text{Enc}}$, replay buffer $\mathcal{D}$, Q-network $\theta$, target network $\theta^- \leftarrow \theta$\;

\For{$\text{episode} = 1$ \KwTo $N_{\text{eps}}$}{
    Reset environment and observe initial state $s_0$\;
    \For{$t = 1$ \KwTo $T$}{
        Encode state using STE: $h_t \leftarrow \mathcal{F}_{\text{Enc}}(\text{Tokenize}(s_t))$\;
        \uIf{random() $< \epsilon$}{
            Select random action $a_t$\;
        }
        \Else{
            $a_t = \arg\max_a Q(h_t, a)$\;
        }
        Solve power allocation via convex optimization: $p_t = \text{ConvexSolver}(a_t, s_t)$\;
        Execute $(a_t, p_t)$, observe reward $r_t$, next state $s_{t+1}$, done flag\;
        Store $(s_t, a_t, r_t, s_{t+1}, \text{done})$ into buffer $\mathcal{D}$\;
        \If{$|\mathcal{D}| > B$}{
            Sample batch $(s, a, r, s', \text{done}) \sim \mathcal{D}$\;
            Encode states: $\mathbf{h}_t = \mathcal{F}_{\text{Enc}}(\text{Tokenize}(s))$,\
            $\mathbf{h}_t' = \mathcal{F}_{\text{Enc}}(\text{Tokenize}(s'))$\;
            Compute latent representations via DLPG: $\mathbf{z}_{0} \leftarrow \text{DLPG}(\mathbf{h}_t, a, M)$\tcp*{Algorithm~\ref{alg:DLPG}}
            Compute target Q-value using Eq.~\eqref{eq:target_value}\; 
            Compute current Q-value using dueling structure according to Eq.~\eqref{eq:Q-value}\; 
            Minimize loss: $\mathcal{L} = \| Q(\mathbf{z}_{0}, a) - y \|^2$\;
            Update $\theta$ via gradient descent\;
            Soft update target: $\theta^- \leftarrow \tau \theta + (1 - \tau) \theta^-$\;
        }

        \If{$\text{done}$}{
            \textbf{break}\;
        }
    }
}
\Return{$\theta$}
\end{algorithm}

\subsection{The Main Steps and Complexity Analyses of the proposed DD3QN-AS}
\par As such, the main steps of the DD3QN-AS algorithm are shown in Algorithm~\ref{alg:whole-alg}, and the complexity analyses are detailed as follows. Specifically, we analyze the computational and space complexity of the proposed DD3QN-AS algorithm during the training and execution phases. 
\subsubsection{Training Phase}
The computational complexity of DD3QN-AS in the training phase is given by $O\left(N_{\text{eps}} T\cdot \left(|\theta_{\text{STE}}| + |\theta_Q| + M \cdot |\theta_{\text{DLPG}}| + C_{\text{opt}}\right)\right)$, which is summarized as follows: 
\begin{itemize}
    \item \textit{Network Initialization}: This phase involves the initialization of network parameters. The computational complexity is $O\left(|\theta_{\text{STE}}| + |\theta_Q| + |\theta_{\text{DLPG}}|\right)$, where $|\theta_Q|$ denotes the number of parameters in the Q-network, $|\theta_{\text{STE}}|$ represents the parameters in the transformer encoder, and $|\theta_{\text{DLPG}}|$ represents the parameters in the DLPG module. 
    \item \textit{Action Selection}: In this phase, states are encoded via the transformer encoder and actions are selected based on Q-values at each time step~\cite{Wang2016}~\cite{Chen2021}. The complexity is $O\left(N_{\text{eps}} T (|\theta_{\text{STE}}| + |\theta_Q|)\right)$, where $N_{\text{eps}}$ is the number of episodes and $T$ is the maximum time steps per episode. 
    \item \textit{Network Update}: During training updates, the diffusion process is the most computationally intensive due to iterative denoising over $M$ steps~\cite{Shribak2024}. The complexity of this phase is calculated as $O\left(\frac{N_{\text{eps}} T B (|\theta_{\text{STE}}| + M \cdot |\theta_{\text{DLPG}}| + |\theta_Q|)}{B}\right) = O\left(N_{\text{eps}} T (|\theta_{\text{STE}}| + M \cdot |\theta_{\text{DLPG}}| + |\theta_Q|)\right)$. 
    \item \textit{Convex Optimization}: Each step involves solving a convex optimization problem for power allocation, which contributes an additional cost $C_{\text{opt}}$ per step~\cite{Liu2025a}. %
\end{itemize}

\par In the training phase, the space complexity of DD3QN-AS algorithm is determined by the size of the neural network parameters and replay buffer, which is $O\left(|\theta_{\text{STE}}| + |\theta_Q| + |\theta_{\text{DLPG}}| + D(2|s| + |a| + 2)\right)$, where $D$ represents the capacity of the replay buffer $|s|$ and $|a|$ denote the dimensions of the state and action spaces, respectively. 

\subsubsection{Execution Phase}
\par In the execution phase, the computational complexity of the DD3QN-AS algorithm is $O\left(|\theta_{\text{STE}}| + |\theta_Q|\right)$, as only the transformer encoder and Q-network are required for state encoding and action selection. Similarly, the space complexity during execution is $O\left(|\theta_{\text{STE}}| + |\theta_Q|\right)$. Since the diffusion component, target network, and replay buffer are not involved during the execution phase, the space and computational overhead are minimized, which makes the process more efficient during inference. 

\section{Simulation Results and Analysis}\label{sec:simulation}
\par In this section, we present comprehensive evaluations of our proposed approach and verify the effectiveness and robustness of the proposed DD3QN-AS in addressing the formulate optimization problem under various settings. 

\subsection{Simulation Setups}
\subsubsection{Environmental Details}
\par We consider a multi-tier hybrid AoI-aware downlink communication system for SAGINs, which consists of a LEO satellite constellation, an HAP, and multiple ground users. Specifically, the constellation comprises $10$ LEO satellites deployed at altitudes ranging from $5 \times 10^5$ m to $1.8 \times 10^6$ m. These satellites operate in diverse orbital planes, including equatorial, mid-inclination, and near-polar orbits. 
Moreover, the HAP hovers at a constant altitude of $20000$ m, maintaining a geostationary position relative to the ground. In addition, we consider 10 ground users randomly distributed within a square area of 1 \text{km} $\times$ 1 \text{km}. Following this, the satellites-to-HAP communication link utilizes FSO link, and the RF links connect the HAP to the ground users, and these communication-related parameters follow~\cite{wu2024deep}. 
\subsubsection{Model Design}

\par The proposed DD3QN-AS algorithm extends the standard D3QN by integrating an STE and a DLPG module. Specifically, the STE leverages a two-layer transformer encoder with four heads and 64-dimensional embeddings to capture temporal structure from tokenized states. The DLPG adopts sinusoidal time embeddings and prompt tokens of size 32 to denoise over multiple diffusion steps, and the dueling Q-network uses a three-layer MLP (256, 256, 128 neurons) with ReLU activation. Moreover, the model is trained using the Adam optimizer~\cite{Kingma2015} with a learning rate of $3 \times 10^{-4}$. Additional training hyperparameters are outlined in Table~\ref{tab:hyperparams}. 
\begin{table}[]
\centering
\caption{Training Hyperparameters for DD3QN-AS}
\label{tab:hyperparams}
\begin{tabular}{lll}
\toprule
\textbf{Parameter} & \textbf{Description} & \textbf{Value}\\
\midrule
$B$ & Batch size & $128$\\
$\gamma$ & Discount factor & $0.99$\\
$D$ & Capacity of the experience replay buffer & $1 \times 10^5$\\
$\tau$ & Soft update rate & $0.005$\\
$d$ & Frequency of policy updates &$10$\\
$M$ & Denoising steps for the diffusion model &$4$\\
$N_{\text{eps}}$ & Number of training episodes &$5000$\\
\bottomrule
\end{tabular}
\end{table}

\subsubsection{Baseline Algorithms}
\par To evaluate the effectiveness of the proposed DD3QN-AS algorithm, we compare it with the following policy-based baseline methods: 
\begin{itemize}
    \item \textit{Random Method}: A stochastic selection strategy that assigns equal probability to all candidate satellites and selects one at random at each decision step. 
    \item \textit{Enhanced Weighted Greedy (EWG) Method}: A heuristic strategy that selects the satellite with the lowest weighted sum of AoI, HAP buffer load, and handover frequency. 
    \item \textit{Round Robin (RR) Method}: A deterministic selection strategy that selects satellites in a fixed cyclic order and increases the selection index by one at each decision step. 
\end{itemize}

\par Moreover, we introduce the following DRL methods as baselines, including the following algorithms:  
\begin{itemize}
    \item \textit{Proximal Policy Optimization (PPO)}~\cite{Cai2023}: An on-policy policy-gradient method that uses a clipped surrogate objective to stabilize training and constrain policy updates. 
    \item \textit{Soft Actor-Critic (SAC)}~\cite{Haarnoja2018}: An off-policy actor–critic algorithm that maximizes return and entropy, which improves exploration and stability. %
    \item \textit{Truncated Quantile Critics (TQC)}~\cite{Kuznetsov2020}: An off-policy method using quantile regression with truncated tails to mitigate value overestimation. 
    \item \textit{DQN}~\cite{Mnih2015}: A value-based algorithm that integrates Q-learning with deep neural networks for stable training.  %
    \item \textit{Double Deep Q-Network (DDQN)}~\cite{Hasselt2016}: An extension of DQN that mitigates Q-value overestimation using separate networks for action selection and evaluation. %
    \item \textit{Standard D3QN}~\cite{Wang2016}: A dueling-architecture variant of D2QN that decouples state value and action advantage to enhance convergence. 
\end{itemize}
\begin{figure}[t]
    \centerline{\includegraphics[width=\linewidth]{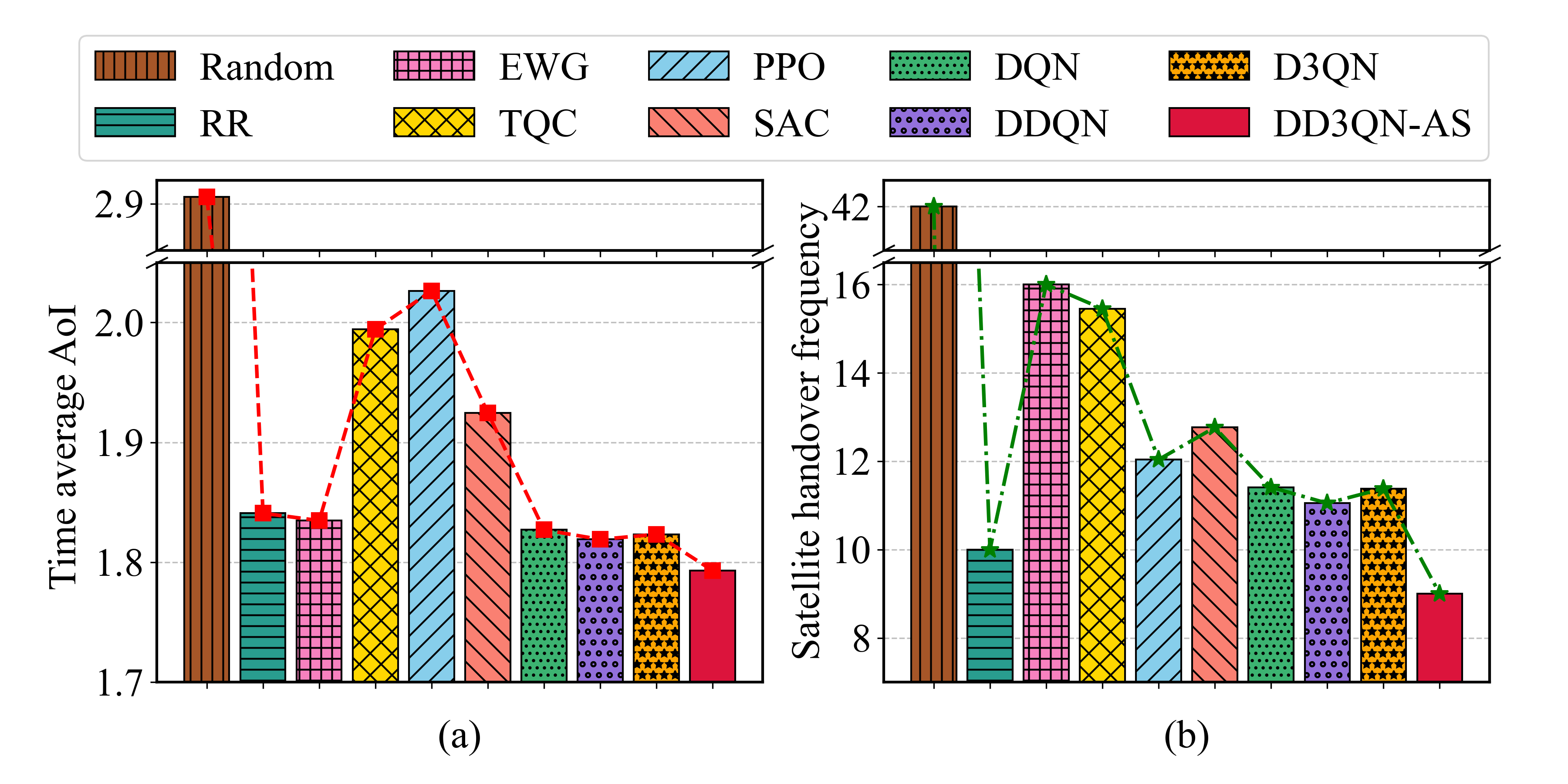}}
    \caption{Optimization objective values obtained by different methods.}
    \label{fig:total_reward}
\end{figure}
\begin{figure}[t]
    \centerline{\includegraphics[width=\linewidth]{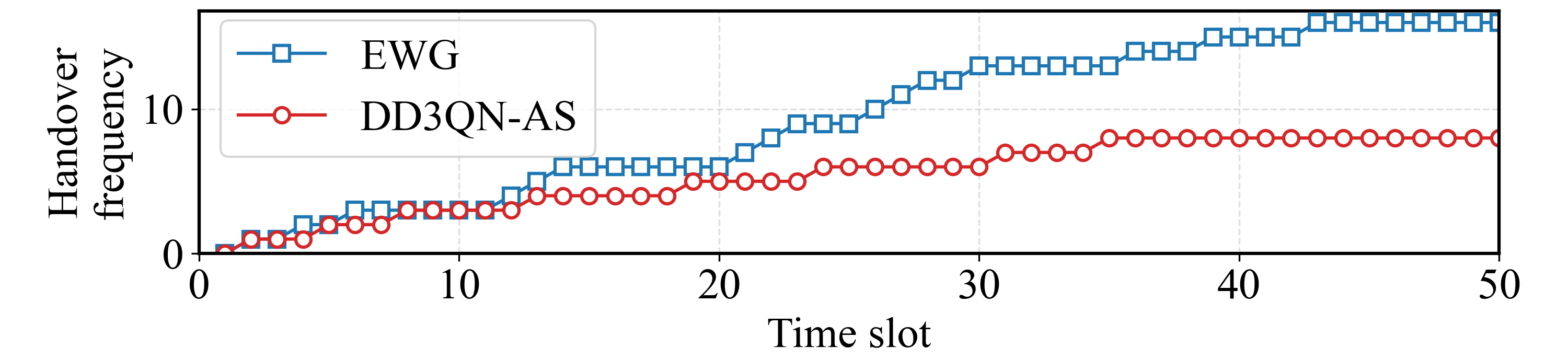}}
    \caption{Accumulated handover frequency over time slots.}
    \label{fig:handover_timeslot}
\end{figure}
\begin{figure*}[t]
\centering           
\subfloat[]
{
    \label{fig:reward}\includegraphics[width=0.33\textwidth]{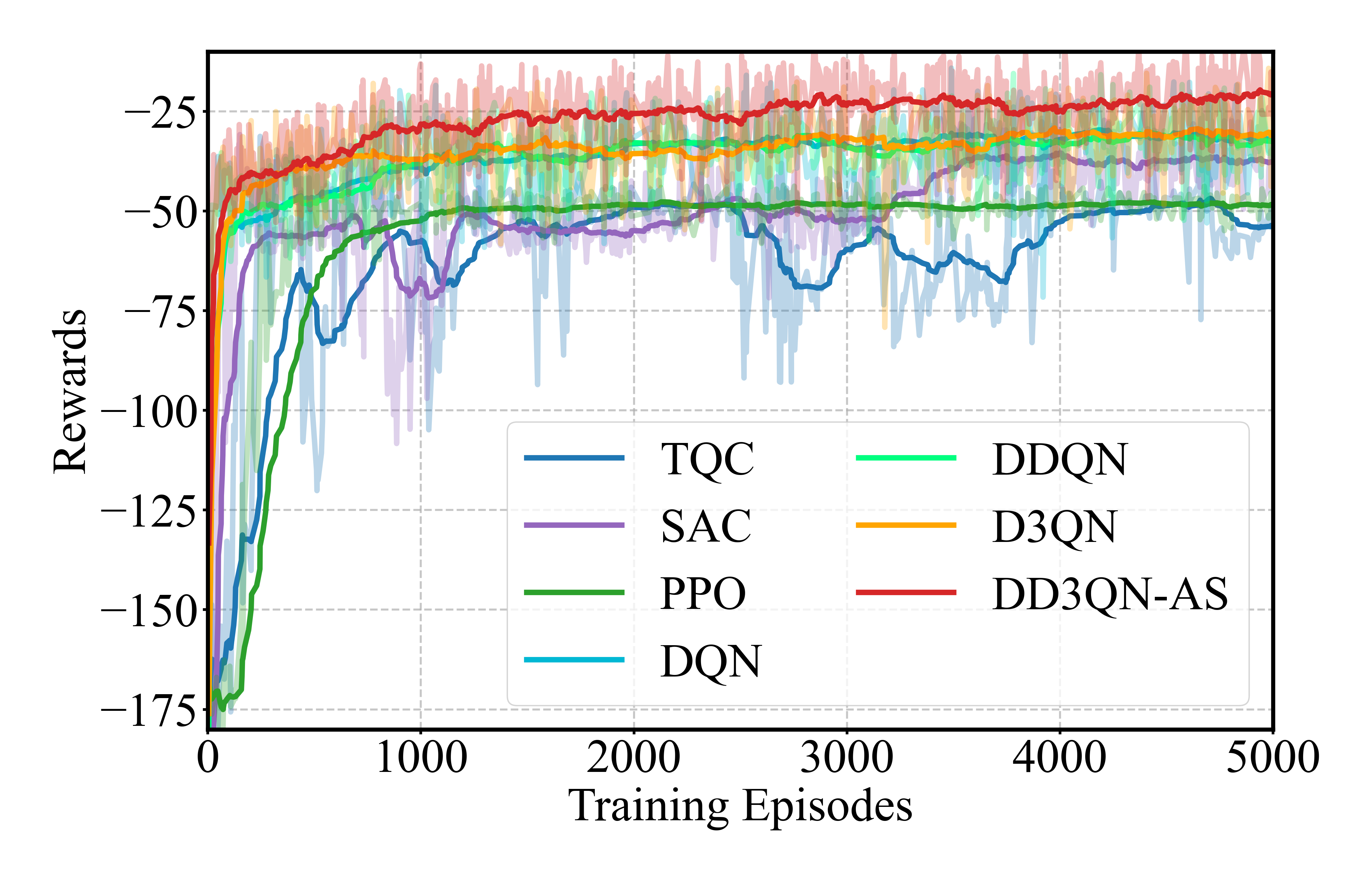}
}
\subfloat[]
{
    \label{fig:aoi}\includegraphics[width=0.33\textwidth]{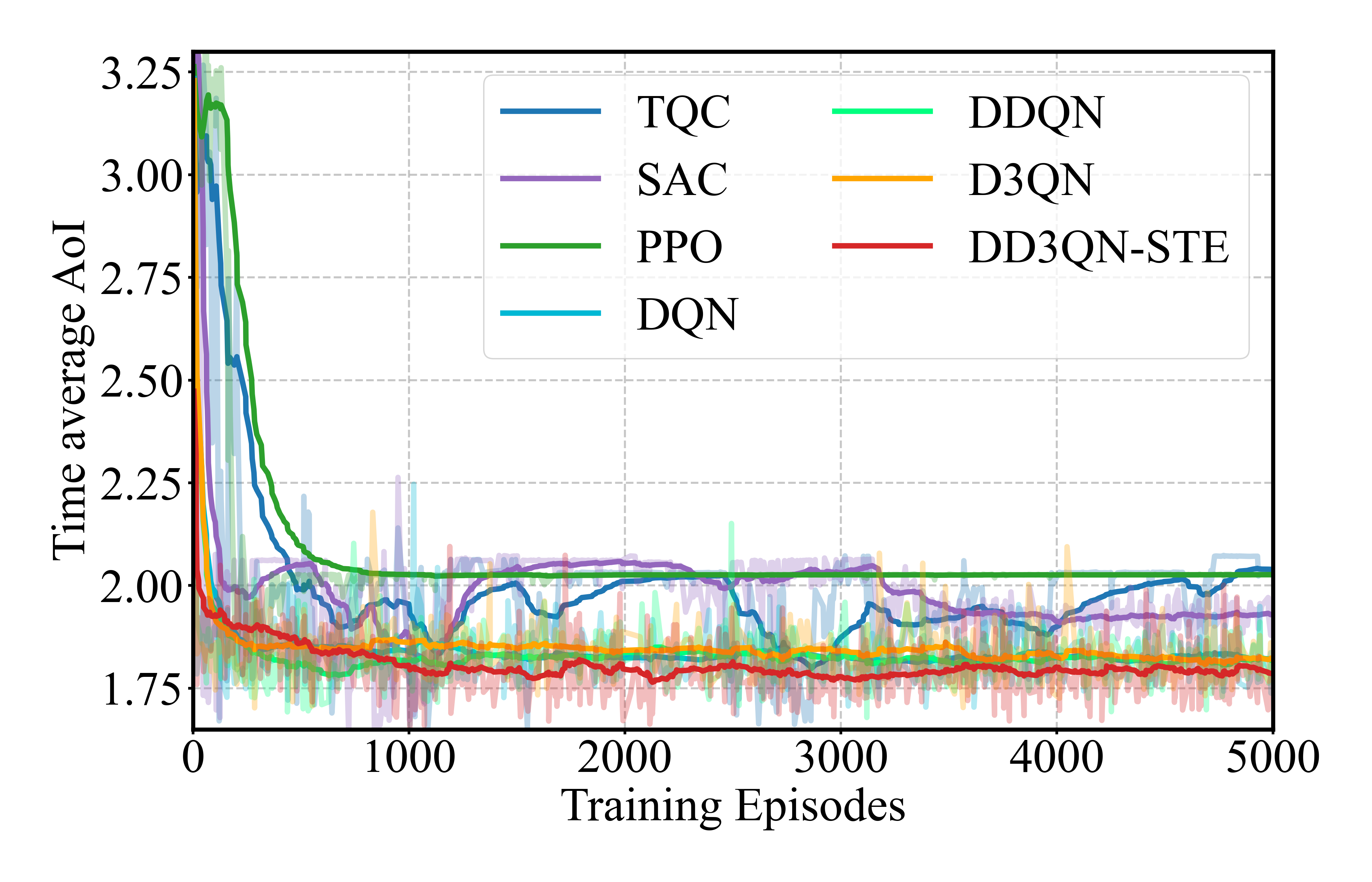}
}
\subfloat[]
{
    \label{fig:handover.}\includegraphics[width=0.33\textwidth]{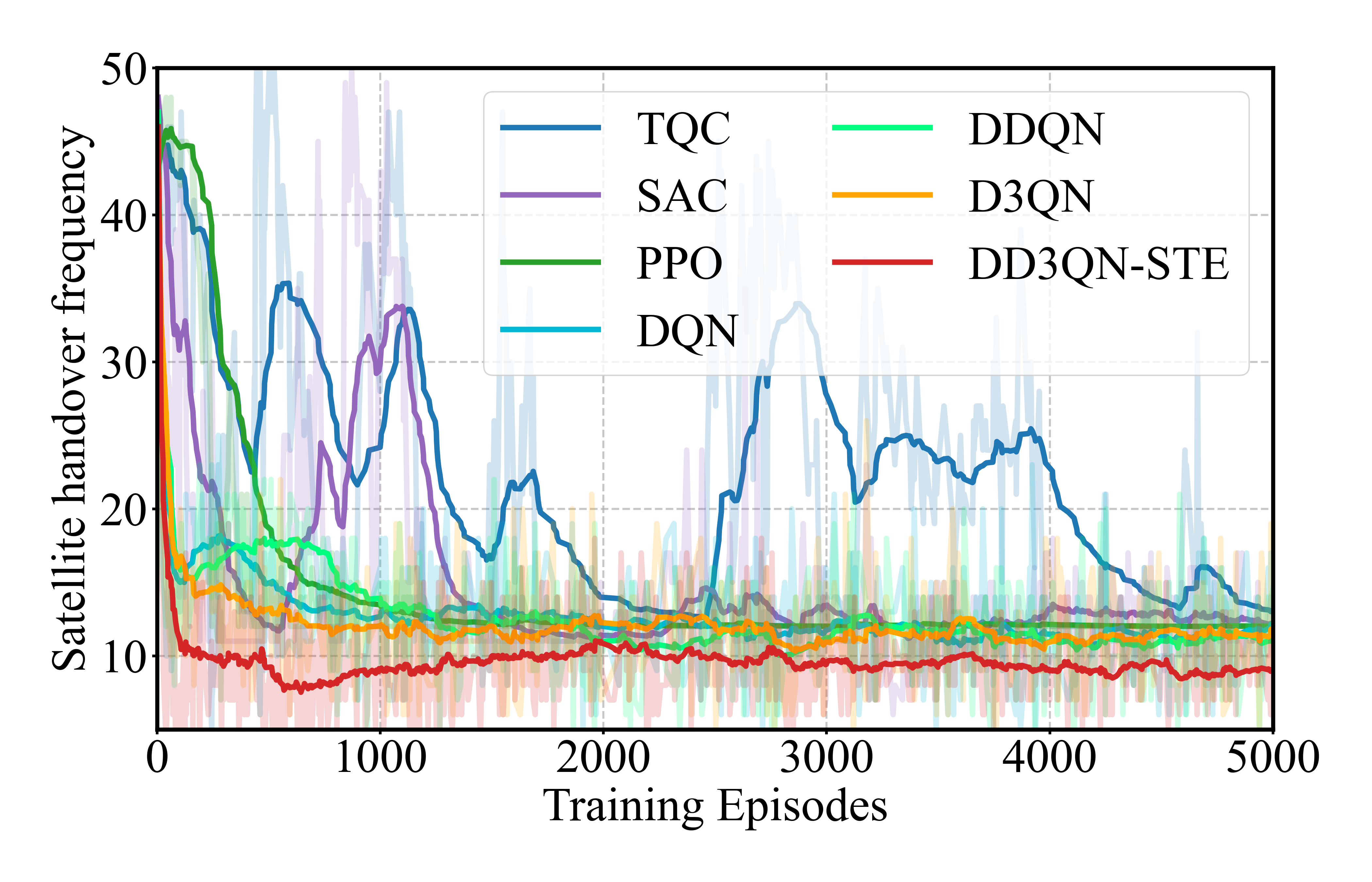}
}
\caption{Convergence comparison of different algorithms. (a) Rewards of different baselines. (b) Time average AoI of different baselines. (c) Handover frequency of different baselines.}  
\label{fig:Convergence comparison.}
\end{figure*}
\begin{figure}[t]
    \centerline{\includegraphics[width=\linewidth]{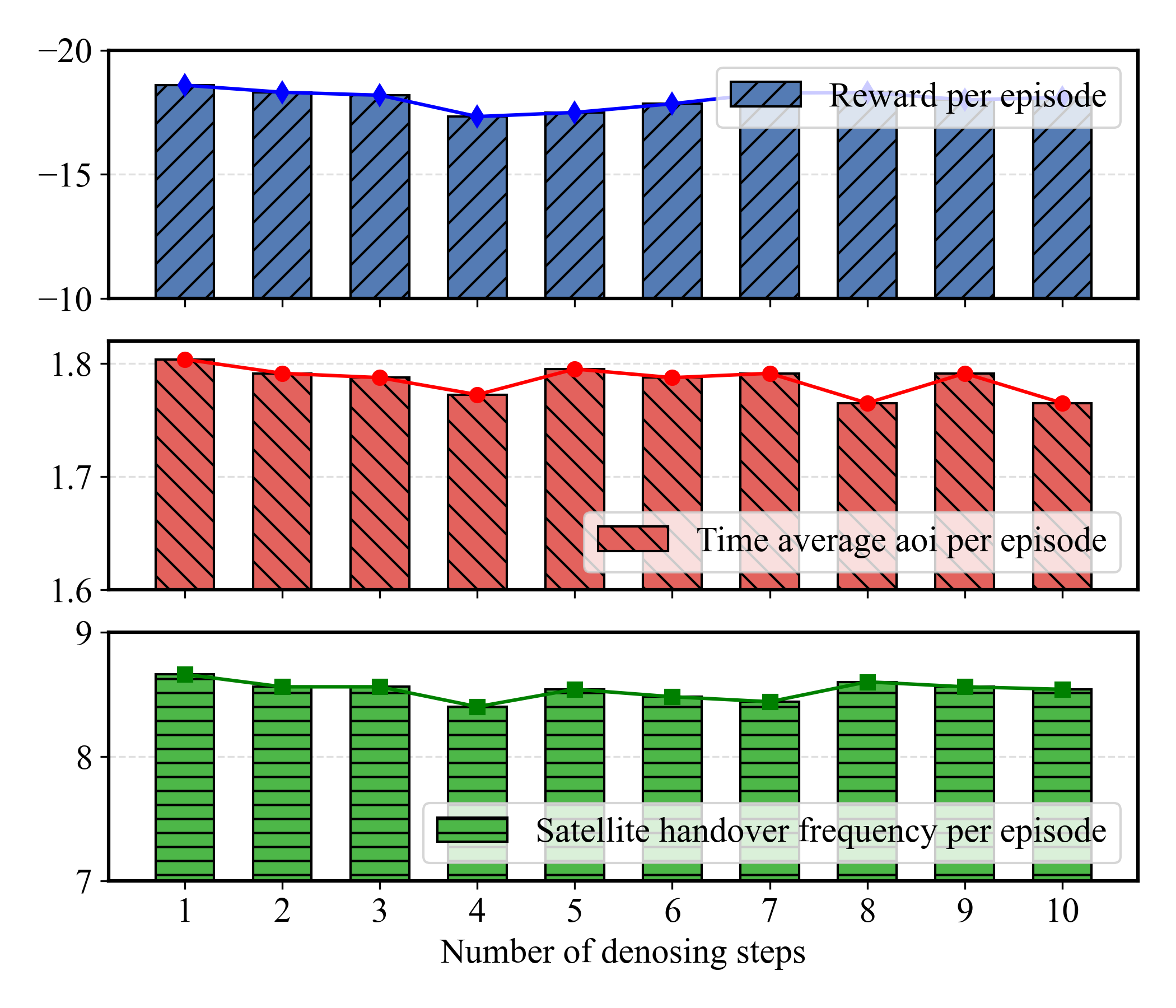}}
    \caption{Comparison of curves of DD3QN-AS with different denoising steps.}
    \label{fig:denosing_steps}
\end{figure}

\subsection{Optimization Results} 
\subsubsection{Algorithm Performance Evaluation}
\par Fig.~\ref{fig:total_reward} presents the optimization objective values achieved by different algorithms. As can be seen, the proposed DD3QN-AS algorithm outperforms the policy-based methods and conventional DRL algorithms in both time average AoI and handover frequency due to its improved environmental understanding and exploration capabilities. Specifically, compared to the best-performing baseline (\textit{i.e.}, DDQN), DD3QN-AS reduces the average AoI by approximately 1.7$\%$ and the handover frequency by 15$\%$. As such, the superior performance for minimizing AoI and handover frequency establishes DD3QN-AS as a practical and effective algorithm for the considered SAGIN system. 
\par Fig.~\ref{fig:handover_timeslot} shows the accumulated number of handovers versus time slots for the proposed DD3QN-AS algorithm and EWG method. We can observe that the DD3QN-AS algorithm performs a more gradual increase in handover frequency compared to EWG. This significant difference in handover accumulation patterns validates the effective handling of ping-pong effects via our learned handover policy. 

\subsubsection{Convergence Performance Comparison with Other DRL Benchmarks}
\par Fig.~\ref{fig:Convergence comparison.}(a) compares the convergence behaviors of all algorithms in terms of episodic reward. As can be seen, the proposed DD3QN-AS converges the fastest and achieves the highest steady-state reward with markedly reduced variance after the initial learning phase. Moreover, the standard D3QN converges more rapidly than other algorithms while saturates at a clearly lower level. 
This may be because that the STE encoder supplies richer spatiotemporal features to the dueling-double Q backbone, thereby improving value estimation under nonstationary conditions and stabilizing learning. 
\par Fig.~\ref{fig:Convergence comparison.}(b) illustrates the convergence behaviors for optimization objective $f_1$, \textit{i.e.}, minimizing the time average AoI of LEO satellites. As can be seen, all the algorithms exhibit convergence for $f_1$. Moreover, the proposed DD3QN-AS achieves and maintains the lowest AoI. D3QN follows but remains higher. Furthermore, other baselines level off at noticeably larger AoI values. This superior performance can be attributed to the STE mechanism that extracts rich spatiotemporal features through transformer-based token processing and the DLPG module that enhances representation robustness under stochastic channel conditions, jointly enabling more precise transmission timing decisions crucial for AoI minimization. 
\par Fig.~\ref{fig:Convergence comparison.}(c) indicates the convergence behaviors for optimization objective $f_2$, \textit{i.e.}, minimizing satellite handover frequency. We can observe that DD3QN-AS converges to a relatively low and stable handover frequency. In contrast, TQC exhibits significant volatility with recurring spikes in handover frequency. This may because that the ability of STE to capture temporal dependencies in satellite movement patterns and channel variations, while the DLPG module contributes to maintaining consistent decision-making under stochastic conditions. In this case, the proposed DD3QN-AS significantly reduces unnecessary handovers compared to benchmark algorithms. 

\subsection{Effect of Different Denoising Steps}
\par The number of denoising steps in the diffusion reverse process constitutes a key parameter that can significantly impact the performance of DM. Firstly, the denoising steps determine the efficiency of DM in reducing noise and producing high-quality samples. Secondly, an increase in denoising steps also leads to longer training time~\cite{Zhang2025}. Therefore, we compare the impact of varying the number of denoising steps on the performance of DD3QN-AS. 
As shown in Fig.~\ref{fig:denosing_steps}, increasing the number of denoising steps in the DLPR module generally leads to improved performance. This improvement can be attributed to the enhanced denoising capacity, which enables the model to better reconstruct the original state representation from noisy perturbations, thereby guiding more robust Q-value estimation through the auxiliary loss. However, we observe that beyond a certain threshold, specifically four steps in our algorithm, the marginal benefits begin to diminish. This is likely due to the overfitting of the model to the noise distribution, where excessive denoising steps result in oversmoothing of latent representations. Consequently, this can suppress fine-grained distinctions among critical state-action pairs, thereby negatively impacting action diversity and long-term decision quality. As such, a moderate setting strikes a balance between noise robustness and representation fidelity, while also mitigating the computational overhead introduced by repeated denoising. 
\begin{figure}[t]
    \centerline{\includegraphics[width=\linewidth]{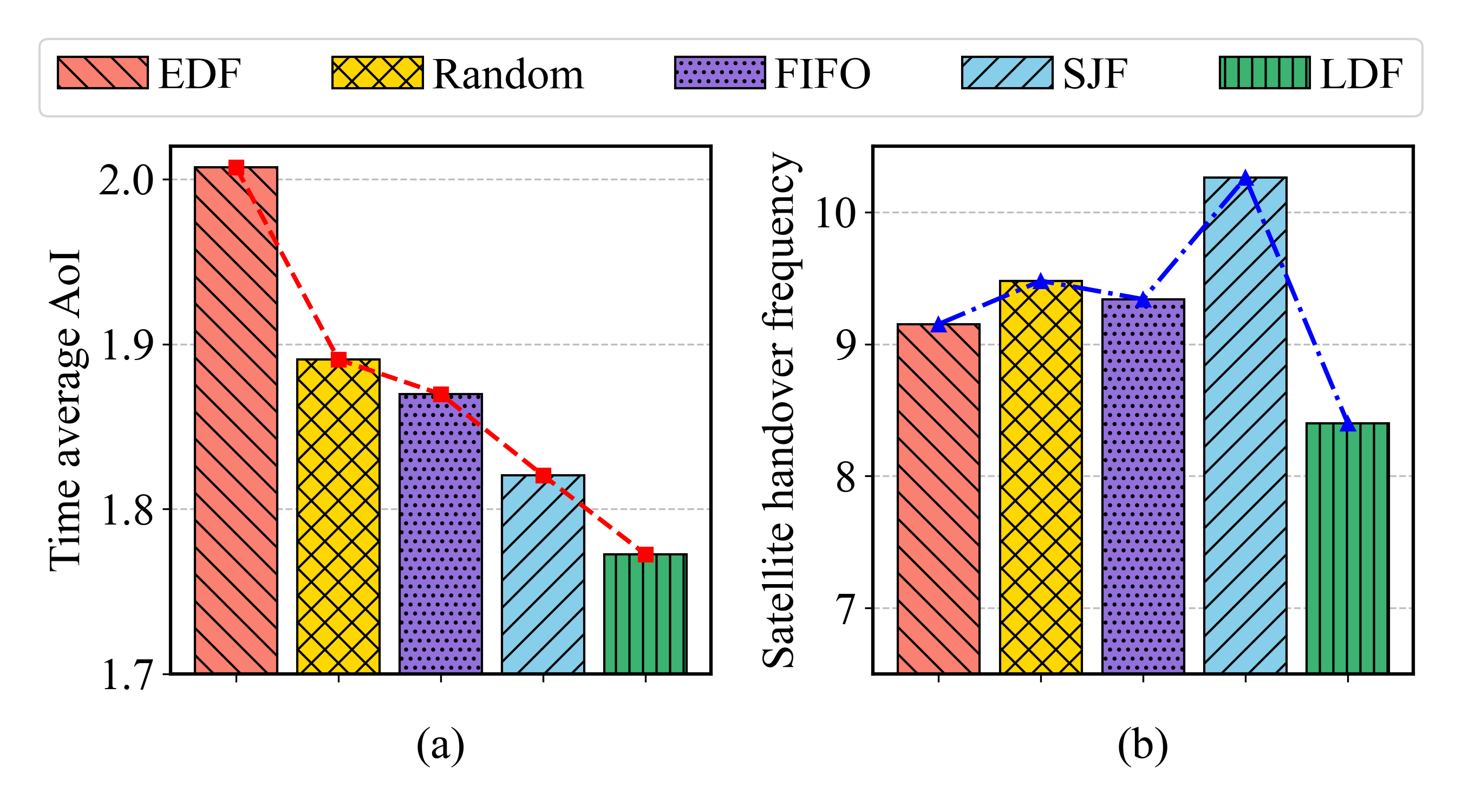}}
    \caption{Performance comparison of different HAP scheduling policies.}
    \label{fig:hap_policy}
\end{figure}

\subsection{Impact of System Settings}
\subsubsection{Comparative Analysis of Different HAP Scheduling Policies}
\par To assess the impact of queue scheduling mechanisms at the HAP layer on overall system performance, we introduce a set of comparative experiments. Specifically, the scheduling policies are as follows: 

\begin{itemize}
    \item \textit{Random}: A non-deterministic policy where a packet is randomly selected from the buffer at each step. 
    \item \textit{First-In First-Out (FIFO)}: The default policy where packets are served in the order they arrive at the HAP buffer. 
    \item \textit{Earliest Deadline First (EDF)}: Packets with the earliest absolute deadlines are prioritized for transmission. 
    \item \textit{latest deadline first (LDF)}: Packets with the most relaxed deadlines are prioritized. 
    \item \textit{Shortest Job First (SJF)}: Packets associated with the smallest estimated transmission time are prioritized. 
\end{itemize}

\par Fig.~\ref{fig:hap_policy}(a) shows the time average AoI across different scheduling policies. We observe that the LDF policy achieves the lowest average AoI among all compared strategies. The superior performance of LDF can be attributed to its mechanism of prioritizing packets that have been buffered for longer durations, which effectively minimizes information staleness. In contrast, the EDF policy results in significantly higher AoI values due to its fundamental characteristic of delaying the transmission of time-sensitive information. 
\par Fig.~\ref{fig:hap_policy}(b) presents the satellite handover frequency across different scheduling policies. As can be seen, SJF exhibits the highest handover frequency, whereas LDF maintains the lowest frequency. The frequent handover under SJF is mainly caused by its preference for short packets, which often leads to unstable scheduling decisions across satellites. In contrast, LDF inherently stabilizes the communication process by continuously serving long-waiting packets, thereby reducing the need for frequent satellite changes. 
\par Overall, these results indicate that the choice of HAP queue scheduling policy significantly influences the performance of the DRL-based satellite selection, with LDF providing the best trade-off between timeliness and handover stability. 

\subsubsection{Impact of Numbers of Ground Users}
\par Fig.~\ref{fig:num_users} illustrates the relationship between the number of ground users and two key performance metrics, which are the time average AoI and satellite handover frequency. As can be seen, the time-average AoI demonstrates a gradual increase as the number of users grows from 1 to 6 and reaches its peak value of approximately 1.82 at 6 users, followed by a slight decline thereafter. Moreover, the satellite handover frequency exhibits a gradual decline as the user count increases from 1 to 6, reaching its minimum at approximately 8.2 handover frequency when serving 6 users. Beyond 6 users, the handover frequency rises again. The above observations primarily result from the downlink bottleneck at the HAP layer, where increasing number of users must compete for fixed transmission resources. 

\begin{figure}[t]
    \centerline{\includegraphics[width=\linewidth]{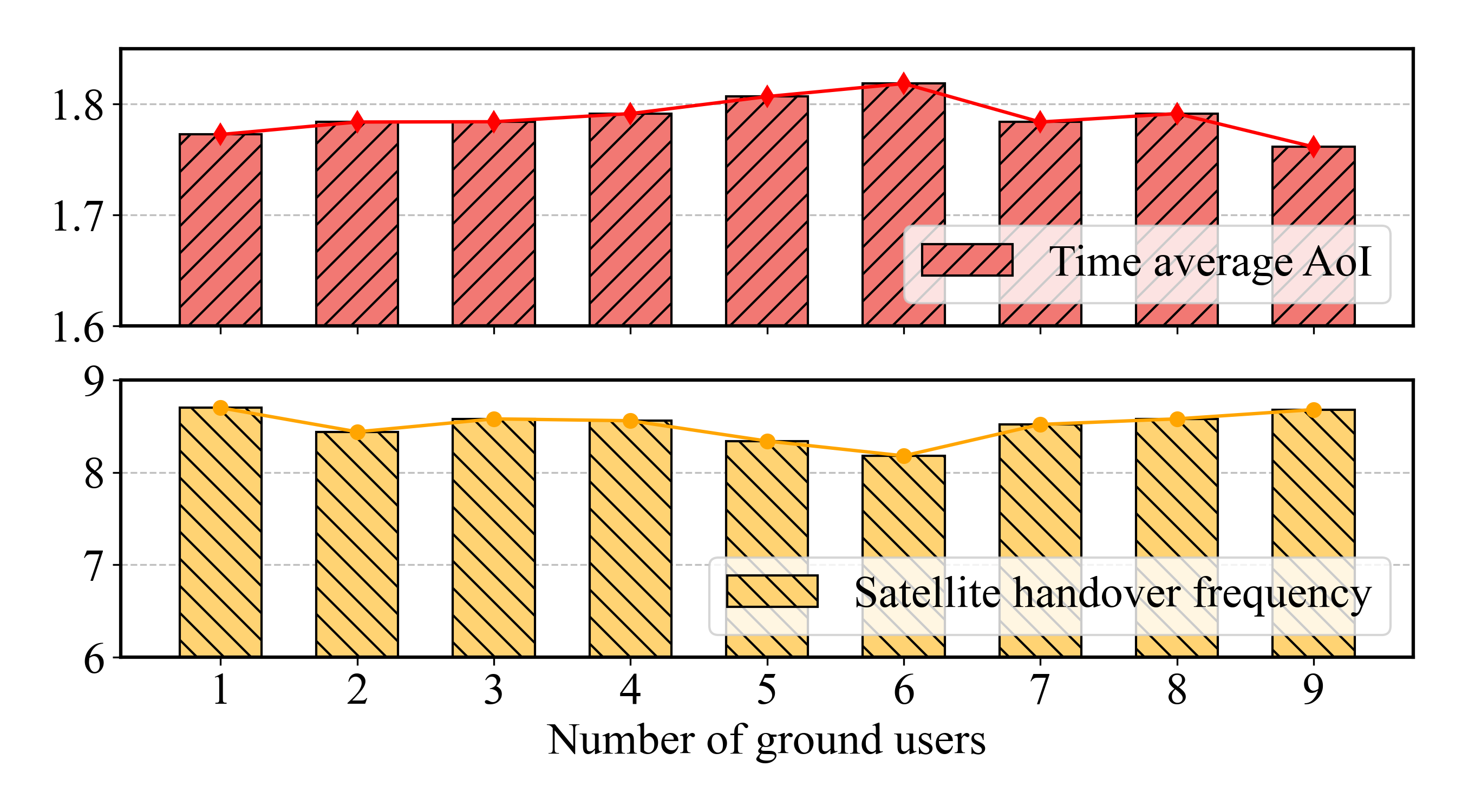}}
    \caption{Comparison of curves of DD3QN-AS with different numbers of ground users.}
    \label{fig:num_users}
\end{figure}

\section{Conclusion}\label{sec:conclusion}
In this paper, we have investigated an AoI-aware HAP-assisted downlink system for SAGIN that employs hybrid FSO/RF communication. We have formulated a joint optimization problem to simultaneously balance the AoI minimization and satellite handover frequency through optimal transmit power distribution and satellite selection decisions. The problem has been proven highly challenging due to its dynamic and non-convex nature with time-coupled constraints that require enhanced adaptability to manage temporal variations in orbital environments. To this end, we have proposed the DD3QN-AS algorithm which effectively processes high-dimensional state information and enables robust decision-making under temporal variations and non-stationary channel conditions. Simulation results have demonstrated that the proposed DD3QN-AS algorithm converges faster and achieves lower AoI and handover frequency than the baselines. Furthermore, we have found that the LDF policy achieves the best balance between AoI minimization and handover stability across several HAP queue scheduling policies. In addition, we have verified the robustness of the DD3QN-AS by extensive performance evaluations across different denoising steps and varying numbers of ground users. 
Future work will focus on extending the generative artificial intelligence model-based DRL approach to more complex SAGIN scenarios and exploring hybrid learning frameworks that could further enhance system performance in highly dynamic satellite communication environments.

\ifCLASSOPTIONcaptionsoff
  \newpage
\fi

\bibliographystyle{IEEEtran}  
\bibliography{SAGINs}     
\vspace{-1.2cm}
\begin{IEEEbiography}[{\includegraphics[width=1in,height=1.25in,clip,keepaspectratio]{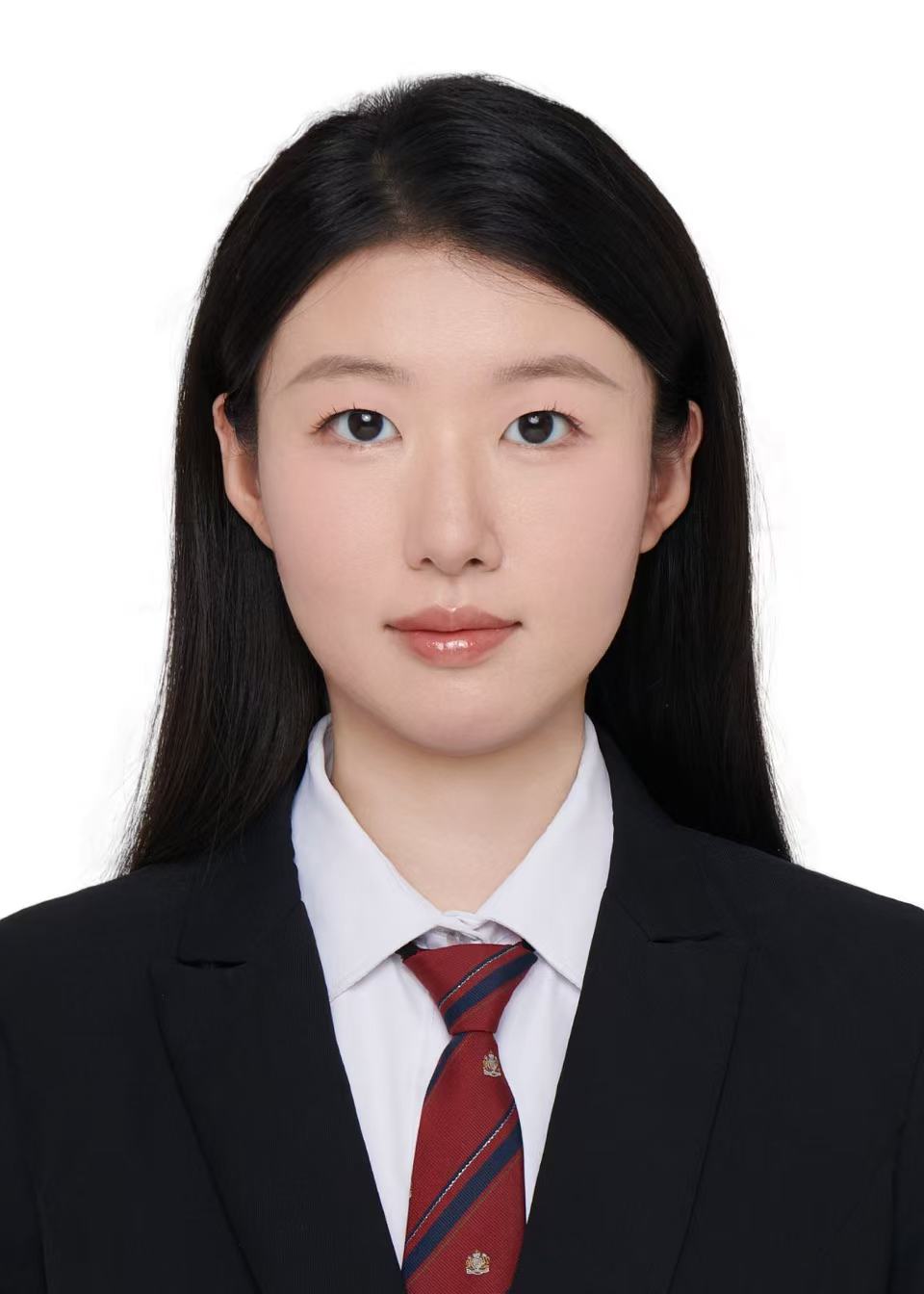}}]{Zifan Lang} received the B.S. degree in College of Computer Science and Technology, Jilin University, Changchun, China, in 2023. She is currently working toward the Ph.D. degree in computer technology with Jilin University, Changchun, China. Her research interests include UAV communications, space-air-ground integrated networks, and deep reinforcement learning. 
\end{IEEEbiography}
\vspace{-1.1cm}
\begin{IEEEbiography}[{\includegraphics[width=1in,height=1.25in,clip,keepaspectratio]{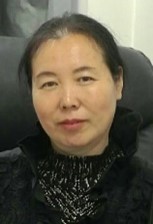}}]{Guixia Liu} received the B.S, M.S, and Ph.D. degree in College of Computer Science and Technology, Jilin University, Changchun, China. She is a research professor in the College of Computer Science and Technology, Jilin University, Changchun, China. Her research interests include machine learning, deep learning, medical image processing and analysis, and biological big data mining and analysis. She has published several papers in journals, BIB, PR and AI.
\end{IEEEbiography}
\vspace{-1.1cm}
\begin{IEEEbiography}[{\includegraphics[width=1in,height=1.25in,clip,keepaspectratio]{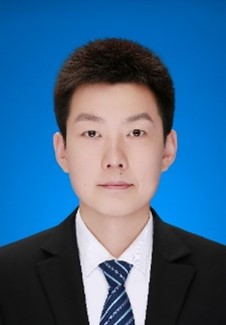}}]{Geng Sun} (Senior Member, IEEE) received the B.S. degree in communication engineering from Dalian Polytechnic University, and the Ph.D. degree in computer science and technology from Jilin University, in 2011 and 2018, respectively. He was a Visiting Researcher with the School of Electrical and Computer Engineering, Georgia Institute of Technology, USA. He is a Professor in the College of Computer Science and Technology at Jilin University. Currently, he is working as a visiting scholar at the College of Computing and Data Science, Nanyang Technological University, Singapore. He has published over 100 high-quality papers, including IEEE TMC, IEEE JSAC, IEEE/ACM ToN, IEEE TWC, IEEE TCOM, IEEE TAP, IEEE IoT-J, IEEE TIM, IEEE INFOCOM, IEEE GLOBECOM, and IEEE ICC. He serves as the Associate Editors of IEEE Communications Surveys \& Tutorials, IEEE Transactions on Vehicular Technology, IEEE Transactions on Network Science and Engineering, and IEEE Networking Letters. He serves as the Lead Guest Editor of Special Issues for IEEE Transactions on Network Science and Engineering, IEEE Internet of Things Journal, IEEE Networking Letters. He also serves as the Guest Editor of Special Issues for IEEE Transactions on Services Computing, IEEE Communications Magazine, and IEEE Open Journal of the Communications Society. His research interests include UAV communications and networking, mobile edge computing (MEC), intelligent reflecting surface (IRS), generative AI, and deep reinforcement learning.
\end{IEEEbiography}

\begin{IEEEbiography}[{\includegraphics[width=1in,height=1.25in,clip,keepaspectratio]{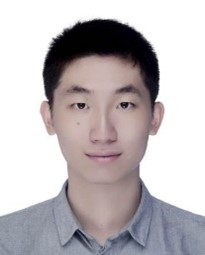}}]{Jiahui Li} (Member, IEEE) received his B.S. in Software Engineering, and M.S. and Ph.D. in Computer Science and Technology from Jilin University, Changchun, China, in 2018, 2021, and 2024, respectively. He was a visiting Ph.D. student at the Singapore University of Technology and Design (SUTD). He currently serves as an assistant researcher in the College of Computer Science and Technology at Jilin University. His current research focuses on integrated air-ground networks, UAV networks, wireless energy transfer, and optimization.
\end{IEEEbiography}

\begin{IEEEbiography}[{\includegraphics[width=1in,height=1.25in,clip,keepaspectratio]{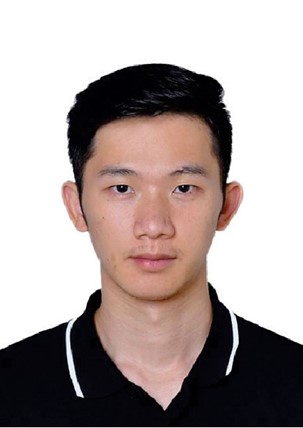}}]{Jiacheng Wang} received the Ph.D. degree from the School of Communication and Information Engineering, Chongqing University of Posts and Telecommunications, Chongqing, China. He is currently a Research Associate in computer science and engineering with Nanyang Technological University, Singapore. His research interests include wireless sensing, semantic communications, and metaverse. 
\end{IEEEbiography}

\begin{IEEEbiography}[{\includegraphics[width=1in,height=1.25in,clip,keepaspectratio]{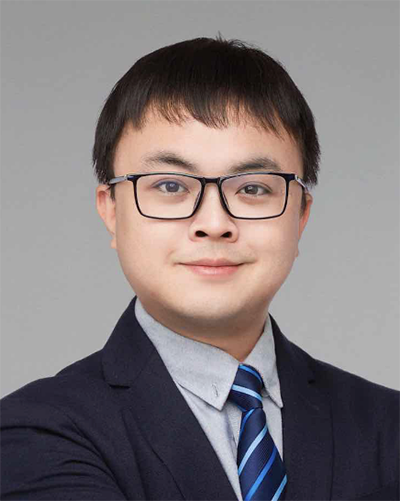}}]{Weijie Yuan} (Senior Member, IEEE) received the Ph.D. degree from the University of Technology Sydney, Australia, in 2019. In 2016, he was a Visiting Ph.D. Student with the Institute of Telecommunications, Vienna University of Technology, Austria. He was a Research Assistant with The University of Sydney, a Visiting Associate Fellow with the University of Wollongong, and a Visiting Fellow with the University of Southampton, from 2017 to 2019. From 2019 to 2021, he was a Research Associate with the University of New South Wales. He is currently an Assistant Professor with Southern University of Science and Technology. He currently serves as an Associate Editor for IEEE TRANSACTIONS ON WIRELESS COMMUNICATIONS, IEEE Communications Standards Magazine, IEEE TRANSACTIONS ON GREEN COMMUNICATIONS AND NETWORKING, IEEE COMMUNICATIONS LETTERS, IEEE OPEN JOURNAL OF COMMUNICATIONS SOCIETY, and EURASIP Journal on Advances in Signal Processing. He is the Lead Editor for the series on ISAC in IEEE Communications Magazine. He was an Organizer/the Chair of several workshops and special sessions on OTFS and ISAC in flagship IEEE and ACM conferences, including IEEE ICC, IEEE/CIC ICCC, IEEE SPAWC, IEEE VTC, IEEE WCNC, IEEE ICASSP, and ACM MobiCom. He is the Founding Chair of the IEEE ComSoc Special Interest Group on OTFS (OTFS-SIG). He has been listed in the World’s Top 2$\%$ Scientists by Stanford University for citation impact since 2021. He was a recipient of the Best Editor Award from IEEE COMMUNICATIONS LETTERS, the Exemplary Reviewer from IEEE TRANSACTIONS ON COMMUNICATIONS/IEEE WIRELESS COMMUNICATIONS LETTERS, and the Best Paper Award from IEEE ICC 2023, IEEE/CIC ICCC 2023, and IEEE GlobeCom 2024.
\end{IEEEbiography}

\begin{IEEEbiography}[{\includegraphics[width=1in,height=1.25in,clip,keepaspectratio]{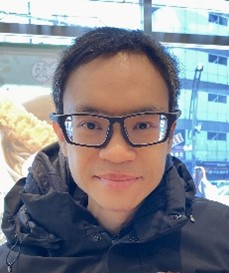}}]{Dusit Niyato} (Fellow, IEEE) received the B.Eng. degree from the King Mongkuts Institute of Technology Ladkrabang (KMITL), Thailand, in 1999, and the Ph.D. degree in electrical and computer engineering from the University of Manitoba, Canada, in 2008. He is currently a Professor with the School of Computer Science and Engineering, Nanyang Technological University, Singapore. His research interests include the Internet of Things (IoT), machine learning, and incentive mechanism design. 
\end{IEEEbiography}

\begin{IEEEbiography}[{\includegraphics[width=1in,height=1.25in,clip,keepaspectratio]{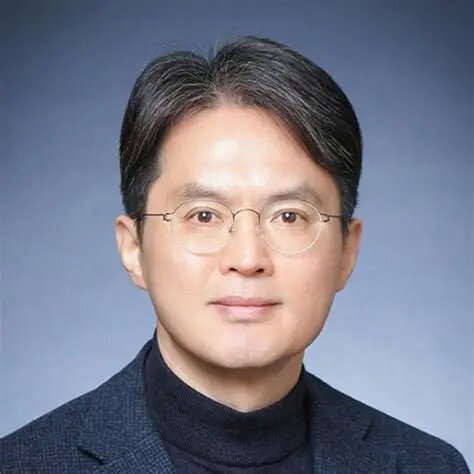}}]{Dong In Kim} (Life Fellow, IEEE) received the Ph.D. degree in electrical engineering from the University of Southern California at Los Angeles, CA, USA, in 1990. He was a tenured Professor with the School of Engineering Science, Simon Fraser University, Burnaby, BC, Canada. He is currently a Distinguished Professor with the College of Information and Communication Engineering, Sungkyunkwan University, Suwon, South Korea. He is a fellow of Korean Academy of Science and Technology and a Life Member of the National Academy of Engineering of Korea. He was the first recipient of the NRF of Korea Engineering Research Center in Wireless Communications for RF Energy Harvesting from 2014 to 2021. He received several research awards, including the 2023 IEEE ComSoc Best Survey Paper Award and the 2022 IEEE Best Land Transportation Paper Award. He was selected the 2019 recipient of the IEEE ComSoc Joseph LoCicero Award for Exemplary Service to Publications. He was the General Chair of the IEEE ICC 2022, Seoul. He has been listed as a 2020/2022 Highly Cited Researcher by Clarivate Analytics. From 2001 to 2024, he served as an Editor, an Editor-at-Large, and an Area Editor of Wireless Communications I for IEEE Transactions on Communications. From 2002 to 2011, he served as an Editor and a Founding Area Editor of Cross-Layer Design and Optimization for IEEE Transactions on Wireless Communications. From 2008 to 2011, he served as the Co-Editor-in-Chief for the IEEE/KICS Journal of Communications and Networks. He served as the Founding Editor-in-Chief for IEEE Wireless Communications Letters from 2012 to 2015. 
\end{IEEEbiography}


\end{document}